\theoremstyle{definition}
\newtheorem{theorem}{Theorem}
\newtheorem{proposition}{Proposition}
\newcommand{\cA}{{\mathcal A}}
\newcommand{\cB}{{\mathcal B}}
\newcommand{\cC}{{\mathcal C}}
\newcommand{\cD}{{\mathcal D}}
\newcommand{\cF}{{\mathcal F}}
\newcommand{\cG}{{\mathcal G}}
\newcommand{\cK}{{\mathcal K}}
\newcommand{\cN}{{\mathcal N}}
\newcommand{\cP}{{\mathcal P}}
\newcommand{\cS}{{\mathcal S}}
\newcommand{\cT}{{\mathcal T}}
\newcommand{\cV}{{\mathcal V}}
\newcommand{\cZ}{{\mathcal Z}}
\def\inv{{\mbox{\tiny -1}}}
\newcommand{\R}{\mathbb{R}}
\newcommand{\C}{\mathbb{C}}
\newcommand\beq{\begin{equation}}
\newcommand\eeq{\end{equation}}
\newcommand{\be}{\begin{equation}}
\newcommand{\ee}{\end{equation}}
\newcommand{\bes}{\begin{eqnarray}}
\newcommand{\ees}{\end{eqnarray}}
\def\cc{{\cal C}}
\def\ot{{\,\otimes \,}}
\def\act{\rhd}
\def\vphi{{\varphi}}
\def\tvphi{{\tilde \varphi}}
\newcommand{\one}{\mbox{$1 \hspace{-1.0mm}  {\bf l}$}}
  \def\cc{{\cal C}}    
      \def\nn{{\nonumber}}
\def\hpsi{{\widehat \psi}}
\def\act{{\, \triangleright\, }}
\newcommand{\su}{\mathfrak{su}}
\newcommand{\SU}{\mathrm{SU}}
\newcommand{\U}{\mathrm{U}}
\newcommand{\SO}{\mathrm{SO}}
\def\extd{\mathrm {d}}
\newcommand{\e}{\epsilon}
\newcommand\acts\triangleright
\newcounter{letter} \newcounter{numeral} \newcounter{Numeral}
\newcommand\Tr{\mathrm{Tr}}
\def\vphi{\varphi}
\def\vphihat{\widehat{\varphi}}
\def\tvphihat{\widehat{\tilde{\varphi}}}
\def\tpsi{\tilde{\psi}}
\def\e{\mbox{e}}
\def\extd{\mathrm {d}}
\newcommand\maps{\colon}
\newtheorem{theo}{Theorem}
\newtheorem{lemma}[theo]{Lemma}
\begin{document}

%

%

\title{\Large \bf Bounding bubbles: {\large the vertex representation of 3d Group Field Theory and the suppression of pseudo-manifolds}}

\author{Sylvain Carrozza}\email{sylvain.carrozza@aei.mpg.de}
\author{Daniele Oriti}\email{daniele.oriti@aei.mpg.de}\affiliation{Max Planck Institute for Gravitational
Physics, Albert Einstein Institute, Am M\"uhlenberg 1,
14476 Golm, Germany, EU}
\affiliation{Laboratoire de Physique Th\'{e}orique, CNRS UMR 8627,
Universit\'{e} Paris XI, F-91405 Orsay Cedex, France, EU}



\begin{abstract}
\noindent 

Based on recent work on simplicial diffeomorphisms in colored group field theories, we develop a representation of the colored Boulatov model, in which the GFT fields depend on variables associated to vertices of the associated simplicial complex, as opposed to edges. On top of simplifying the action of diffeomorphisms, the main advantage of this representation is that the GFT Feynman graphs have a different stranded structure, which allows a direct identification of subgraphs associated to bubbles, and their evaluation is simplified drastically. As a first important application of this formulation, we derive new scaling bounds for the regularized amplitudes, organized in terms of the genera of the bubbles, and show how the pseudo-manifolds configurations appearing in the perturbative expansion are suppressed as compared to manifolds. Moreover, these bounds are proved to be optimal.
\end{abstract}

\maketitle
\bigskip \bigskip

\section{Introduction and background}
Group field theories (GFTs)\cite{gftreview, iogft2, Rivasseau:2011xg} are a d-dimensional generalization of matrix models for 2d gravity \cite{mm} in the form of field theories over group manifolds. Moreover, they enter conspicuously in  the definition of the dynamics of loop quantum gravity \cite{carlo, thomas}, and share many conceptual and mathematical ingredients with simplicial quantum gravity approaches, like quantum Regge calculus \cite{qRC} and dynamical triangulations \cite{DT}, bringing them in the position to profit from the achievements and insights of all these other approaches \cite{libro}. 

As matrix models, they are characterized by a combinatorial pattern of identifications of field arguments in the interaction, such that the perturbative expansion of the theory generates a sum over d-dimensional simplicial complexes. For quantum gravity models, these simplicial complexes represent discrete spacetimes and the perturbative sum is expected to provide a definition of a covariant dynamics of quantum gravity in d dimensions, i.e. a sum over geometries.
Matrix models \cite{mm} succeed in doing so, as said, in the simple case of 2d Euclidean quantum gravity; already in this simple case, it has been a rather non-trivial task, which moreover led to development of very powerful tools and a pletora of further applications. This success rests on four main (sets of) achievements: 1) the Feynman amplitudes that the models associate to the 2d simplicial complexes generated in perturbative expansion can be directly related to simplicial gravity path integrals (coupled to matter) weighted by the Regge action for equilateral triangulations; therefore a clear link with gravity and geometry is ensured already at the discrete level, and this guides both the development and the interpretation of the theory; 2) the perturbative sum over simplicial complexes can be controlled in the sense that models can be written in which only simplicial manifolds are generated (it is enough to ensure orientability) and, most importantly, it can be organized as a topological expansion; thanks to this, one can identify a regime (large dimension $N$ of the matrices) in which simple topologies dominate; 3) a continuum (thermodynamic) limit of the models can be defined, both for trivial topologies only and admitting the contribution of all topologies, for appropriate critical values of the parameters of the models; 4) in this continuum limit, one is able to match the quantum dynamics of the matrix model (transition amplitudes and their Schwinger-Dyson equations, critical exponents, thermodynamical quantities, etc) with quantum geometrodynamics (Wheeler-DeWitt equation, continuum path integral for given topology, etc) and, then, semi-classical gravity coupled to matter in two dimensions.

The first attempt at a generalization led to tensor models \cite{tensor}, based on the same basic idea, but with matrices replaced by tensors, with index pairing in the interaction such that their perturbative expansion would generate d-dimensional simplicial complexes. However, while tensor models are still being developed with interesting applications \cite{sasakura}, they could not reproduce any of the above crucial steps towards success as quantum gravity models, as matrix models did. No direct link with discrete quantum gravity and, most problematic, no control over the perturbative expansion, meant that no continuum limit and no link with continuum gravity could be obtained. In particular, concerning the sum over simplicial complexes generated in perturbation theory, one should notice that: a) all topologies are generated in the expansion and the classification of topologies is an open mathematical problem in 3d, and a known impossibility in higher dimensions; b) alongside manifold configurations, tensor models generate all sorts of more pathological configurations \cite{tensor, DP-P}. The big issue would be to discriminate between all these structures and somehow show that only manifolds of some nice topology dominate. This has not been possible for tensor models. 

Two basic (somewhat complementary) attitudes can be taken in front of this failure. One is to leave aside for the moment the issue of generating the sum over simplicial complexes by some field-theoretic mechanism and instead {\it define} the model as a sum over (equilateral) triangulations of a spherical topology weighted by the Regge action. This leads to the (causal) Dynamical Triangulations approach \cite{DT}. The second identifies the source of the problems in the lack, in tensor models, of enough degrees of freedom to capture the greater complexity of higher-dimensional spacetimes and geometries, and thus goes in the direction of identifying first and then incorporating the correct additional degrees of freedom. This leads to Group Field Theories.

The kind of degrees of freedom to be included and the way to do it is suggested by loop quantum gravity \cite{carlo, thomas}. This is the most advanced canonical quantization of continuum gravity and, starting from a reformulation of gravity as a gauge theory of the Lorentz connection,  has identified the kinematical states of quantum space to be spin networks, i.e. graphs labelled by irreducible representations of the Lorentz group. The dynamics of the same states is given covariantly in terms of spin foam amplitudes \cite{SF}, i.e. functions of the same representations, assigned to 2-cells of cellular complexes representing each a possible discrete history of a spin network state, which should be then summed over to recover the full dynamics, in the spirit of a sum over geometries. Both at the level of quantum states and at the level of their dynamics, then, the basic variables of the theory are either group elements, interpreted as elementary parallel transports of a Lorentz connection, or group representations, interpreted as quantum numbers of geometric observables. These are then the degrees of freedom that are added to tensor models in the group field theory formalism, the basic field being indeed a function on the corresponding group manifold, which could be understood as a second quantization of an elementary spin network wavefunction \cite{gftreview, 3rd}. In spin foam models the cellular complex defining a possible evolution process of a spin network is usually taken to be topologically dual to a simplicial complex (which implies some combinatorial restriction on  both spin network states and cellular complex itself). Remarkably, one can then show that for any spin foam model, i.e. for any choice of dynamical amplitudes, there exists a group field theory which generates it as a Feynman amplitude associated to the simplicial complexes obtained in perturbative expansion. 

Because of this choice of combinatorial structures and because the most studied spin foam models themselves are obtained by quantization of simplicial geometry, one would expect a strict relation between the spin foam amplitudes, and thus the corresponding group field theory, and simplicial gravity path integrals. This relation has been clarified and strengthened by the recent non-commutative metric representation of group field theories \cite{aristidedaniele}, based on the so-called group Fourier transform \cite{PR3,majidfreidel,karim}, a very natural construction on the type of phase space used in loop quantum gravity \cite{ACZ,thomas,fluxes}, Chern-Simons theory \cite{alekseev,hannothomas} and discrete BF theories \cite{eteravalentin,aristidedaniele,danielematti}. 
In this representation, group field theories are written as non-commutative field theories on Lie algebras and the corresponding Feynman amplitudes take explicitly the form of simplicial gravity path integrals, which proves an exact duality between such path integrals and spin foam models. This formulation brings the (quantum) geometry of discrete gravity and of spin foam models to the forefront, and thus is a very convenient starting point for the construction of new models as well as for the physical understanding of existing ones. In fact, it has been crucial \cite{diffeos} in identifying the GFT counterpart of the simplicial gravity transformations that are the  discrete analogue of continuum diffeomorphisms in General Relativity, i.e. translations of vertices of the simplicial complex (in some embedding) that induce transformations of the edge lengths or of the discrete triad (depending on the specific formulation used) associated to the  same simplicial complex, and which leave the gravity action (and solutions of the equations of motion) and the discrete gravity path integral invariant. These are well studied in discrete classical and quantum gravity \cite{qRC,laurentdiffeo,biancadiffeos,biancabenny}, and the main result of \cite{diffeos} has been to identify field transformations of the GFT field that imply these simplicial diffeomorphism transformations at the level of the corresponding Feynman amplitudes, thus of the simplicial gravity path integral. These symmetry transformations also suggest a reformulation of the same GFT model, based on vertex variables on which they act naturally.
In this paper we detail further, in section ~\ref{sec:vertex}, this non-commutative metric formulation in vertex variables, and then use it extensively, from section ~\ref{sec:bubble} onwards, to analyze the dependence of the GFT Feynman amplitudes on the combinatorial structure of the Feynman diagrams, in particular their subgraphs called \lq bubbles\rq and encoding the \lq manifold-ness \rq of their dual simplicial complexes. 

Thus we see that the first kind of achievement of matrix models (and the first failure of simple tensor models) is dealt with successfully by the group field theory formalism. The second type of issues, having to do with the control over the sum over simplicial complexes, is much more thorny, but can now be tackled with the very powerful methods of quantum field theory, in particular those used to study perturbative renormalization, alongside purely combinatorial methods from algebraic topology. In fact, an impressive amount of new results has been obtained recently in this respect, in particular for so-called \lq colored\rq group field theories \cite{RazvanColored}. These go from exact power counting results \cite{lrd,linearizedgft, ValentinMatteo1,ValentinMatteo2,ValentinMatteo3} to perturbative scaling bounds \cite{scaling3d, vincentcolored} and first steps in computing radiative corrections \cite{josephvalentin}, from properties of the combinatorial structures generated \cite{Gurau:2010nd,francesco,MatteoComment} to the important proof that manifolds of spherical topology dominate in the limit of large cut-off (the analogue of the large N limit of matrix models) in any dimension, at least for topological models (not yet for 4d gravity models) \cite{RazvanN,RazvanVincentN,RazvanFullN}. Much remains to be understood, in this respect, and the more we understand even for simpler GFT models the more we will be able to achieve for realistic 4d gravity models. However, it is already clear that GFTs can solve also the second main failure of tensor models, and achieve also the second main set of successes of matrix models. It seems that the incorporation of the key insights of loop quantum gravity and spin foam models was a good move forward. 

In this paper, we contribute further to addressing these topological issues, by taking further advantage of the  improving geometric understanding of the GFT formalism. We show, in section ~\ref{sec:bubble}, that the vertex re-writing of the (colored) Boulatov GFT allows a direct identification of the topology of the 3-cells dual to the vertices of the simplicial complexes, the bubbles, which in turn characterize the \lq manifold-ness\rq  of the complex itself, and a straightforward evaluation of the associated contributions to the GFT Feynman amplitudes. We derive, in sections ~\ref{sec:bubble}, ~\ref{sec:bubble2} and ~\ref{sec:bubble3}, new scaling bounds for the regularized amplitudes, organized in terms of the genera of the bubbles, and show how the pseudo-manifolds configurations appearing in the perturbative expansion are suppressed as compared to manifolds. Moreover, these bounds are proved to be optimal, in section ~\ref{sec:optimal}.  

All of the above is crucial for the general programme of GFT renormalization and thus for the problem of the continuum limit in GFT quantum gravity \cite{gftfluid,Rivasseau:2011xg}. This is another big open issue, of course, and in many ways the decisive one for considering GFTs candidates for a complete theory of quantum gravity and of quantum spacetime. Some results have been obtained recently in this direction, exploring the non-perturbative regime of the theory in particular via mean field methods \cite{eterawinston, noi,emergent,danielelorenzo, ejd}, but it is clear that a more exact evaluation of the GFT partition function, at least for some models, would be desirable and that this will need a more detailed understanding of the combinatorial properties of its perturbative expansion. We hope that the results we present in this paper will also be of help in this respect.

%
\section{Model and translation symmetry}
%
\subsection{Definition of the model}

We consider a (slightly modified) version of the colored bosonic Boulatov model defined in \cite{vincentcolored}. This is a field theory of four complex scalar fields $\{\tvphi_\ell\ , \ell=1,..,4\}$ over three copies of $\SO(3)$, which respect the following gauge invariance:
\beq \label{gauge}
\forall h \in \SO(3),  \qquad \tvphi_\ell(hg_1, hg_2, hg_3)  \, = \, \tvphi_\ell(g_1, g_2, g_3) .
\eeq
They are interpreted as quantum triangles, the $\SO(3)$ variables being interpreted as parallel transports of an $\SO(3)$ connection from the center of the triangle to the (center of the) edges (see figure (\ref{triangle_edge})). We consider an action where the interaction encodes the gluing of these triangles following the pattern of an oriented tetrahedron (see figure (\ref{interaction_edge}))\footnote{The orientation chosen is the only difference between our definition of the model and the one in \cite{vincentcolored}. This choice does not modify any of the results. This interaction term was already considered in \cite{lrd}, but in a non-colored model.}, and the kinetic term is trivial, i.e. it contains only delta functions on the group manifold:

\bes \label{action}
S[\tvphi]&=& S_{kin} [\tvphi] + S_{int}[\tvphi] ,\\
S_{kin} [\tvphi] &=& \int [\extd g_i]^3 \sum_{\ell=1}^4   \, \tvphi_\ell(g_1, g_2, g_3) \overline{\tvphi_{\ell}}(g_1, g_2. g_3) ,\\
S_{int}[\tvphi] &=& \lambda \int [\extd g_{i} ]^6 \, \tvphi_1(g_1, g_2, g_3) \tvphi_2(g_3, g_5, g_4) \tvphi_3(g_5, g_2, g_6) \tvphi_4(g_4, g_6, g_1)
\nn \\
& & +\overline{\lambda} \int [\extd g_{i} ]^6 \, \overline{\tvphi_1}(g_1, g_2, g_3) \overline{\tvphi_2}(g_3, g_5, g_4) \overline{\tvphi_3}(g_5, g_2, g_6) \overline{\tvphi_4}(g_4, g_6, g_1). 
\ees
The ordering of the variables in the fields defines the orientation of the triangles, which we will use shortly to define a symmetry transformation for the fields, so that it is relevant up to even permutations only. However, one could repeat the whole construction with a different choice. The orientability of the Feynman diagrams (simplicial complexes) of the model is proven using the coloring of the same \cite{vincentcolored,francesco} and does not make use of this ordering. Here we oriented the four triangles of the tetrahedral interaction inward (in a right-handed fashion).

\begin{figure}[h]
  \centering
  \subfloat[Field]{\label{triangle_edge}\includegraphics[scale=0.5]{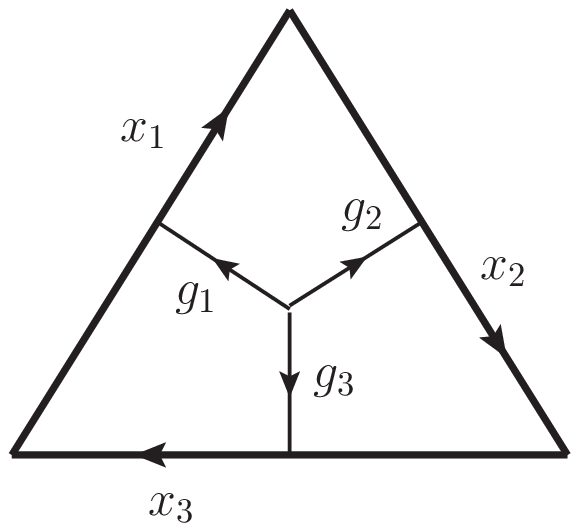}}                
  \subfloat[Interaction vertex (clockwise)]{\label{interaction_edge}\includegraphics[scale=0.5]{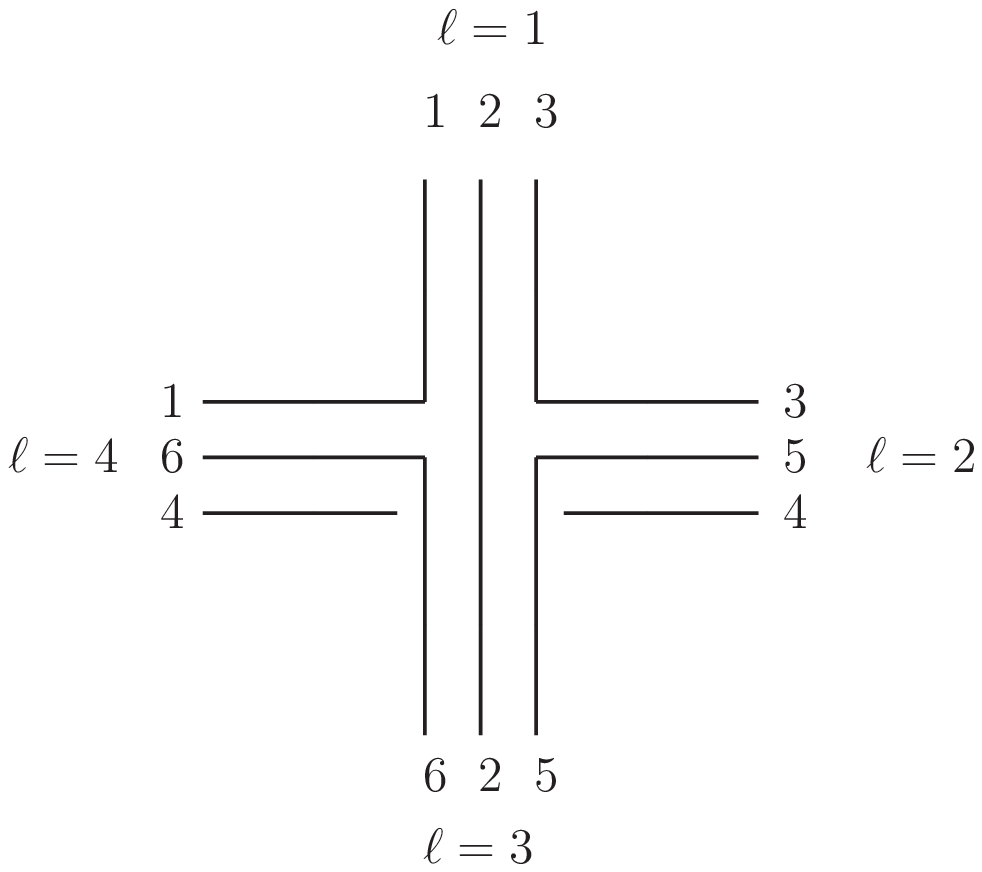}}
  \subfloat[Geometrical interpretation]{\label{tetrahedron_edge}\includegraphics[scale=0.4]{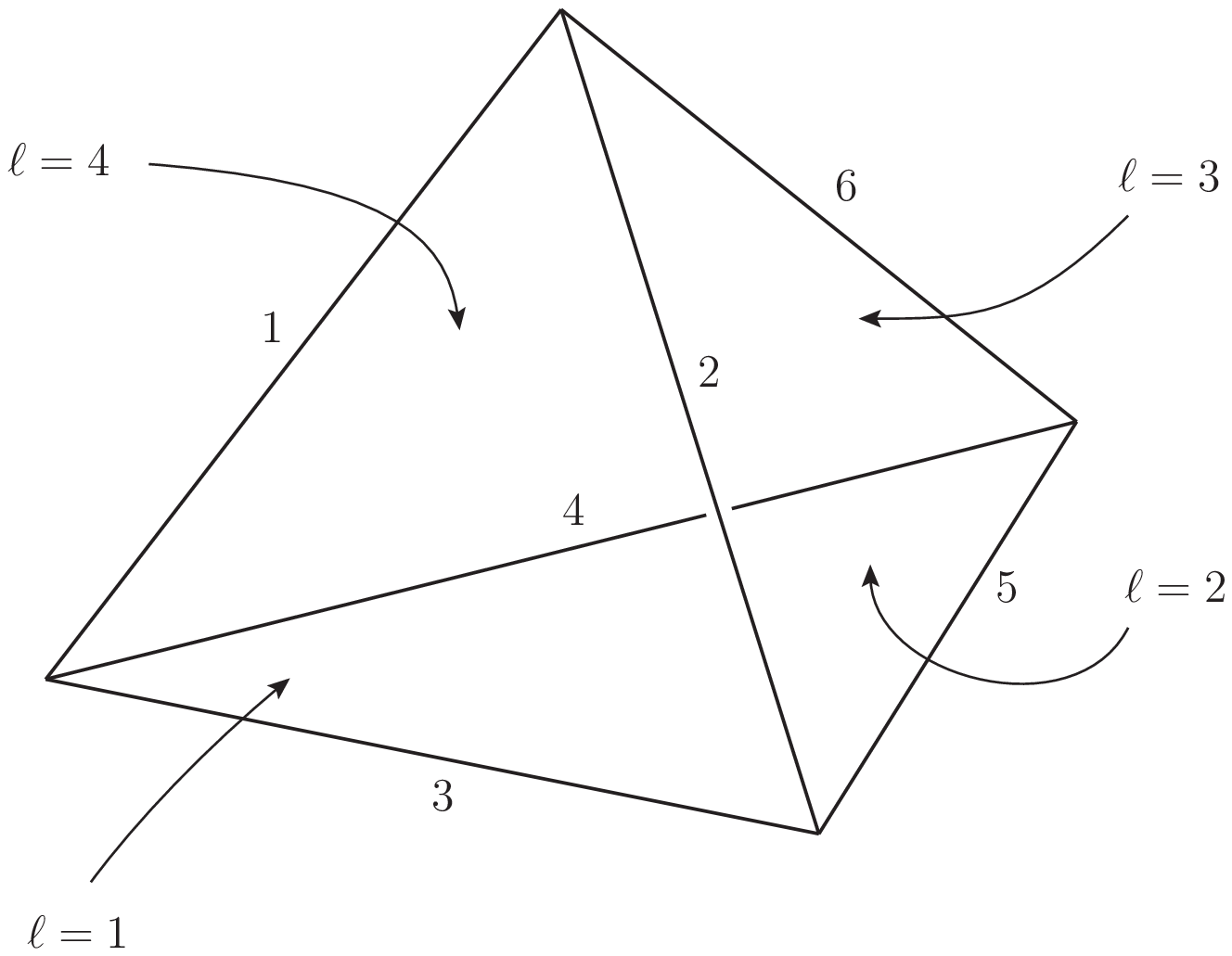}}
  \caption{Graphical representation of a field, and the interaction vertex in usual edge variables.}
  \label{edge_rep}
\end{figure}

Alternatively, we can work with (non-commuting) Lie algebra variables $x \in \su(2) \sim \R^3$, by Fourier transforming the fields as:
\beq \label{fourier}
\tvphihat_\ell(x_1, x_2, x_3) :=\int  [\extd g_i]^3\, \tvphi_\ell(g_1, g_2, g_3)  \,\e_{g_1}(x_1) \e_{g_2}(x_2) \e_{g_3}(x_3) ,
\eeq
where $\e_g \maps \su(2) \!\sim\! \R^3 \to \U(1)$ are non-commutative plane-waves \cite{PR3,majidfreidel,karim}, and functions on $\SO(3)$ are now identified with functions on $\SU(2)$ invariant under $g \to - g$. The definition of the plane-waves involves a choice of coordinates on the group. Following \cite{aristidedaniele}, we adopt: 
\beq
\forall g \in \SU(2) \,, \qquad \e_g \maps \, x \mapsto e^{\rm{i} \Tr(x |g|)}
\eeq 
where for $g \in \SU(2)$ we denote $|g| \equiv \rm{sign}(\Tr \,g) g$, and $\Tr$ is the trace in the fundamental representation of $\SU(2)$. The Lie algebra variables have a metric interpretation, as vectors associated to the edges of the triangles \cite{aristidedaniele}. The action has the same combinatorial structure as in group variables, except that the pointwise product for functions on $\SU(2)$ is replaced by a non-commutative and non-local product for functions on $\su(2)$, noted $\star$. It is induced by the group structure of $\SU(2)$, as dual to the convolution product for functions on the group. Defined first on plane-waves: 
\beq  \label{star}
(\e_{g} \star \e_{g'})(x) \!:=\! \e_{gg'}(x)\,,
\eeq
it is then extended to the image of the non-commutative Fourier transform by linearity.

We can define the quantum theory by the following path integral:
\beq
\cZ = \int \extd \mu_{inv}(\tvphi_\ell, \overline{\tvphi}_\ell) \, \e^{- S[\tvphi]}\,,
\eeq
but since the Lebesgue measure $\mu_{inv}$ on the space of left invariant fields is not even defined, this is only formal. The strategy usually adopted (a detailed discussion can be found in \cite{scaling3d}) consists in two steps. First the action is re-written in terms of generic fields, with constraints imposing left invariance. Secondly, the non-trivial kinetic term thus obtained is combined with the Lebesgue measure to give a well-defined Gaussian measure. Integrating the exponential of the interaction part of the action with respect to this measure makes sense of the previously ill-defined partition function. This strategy will also be used in the construction of a vertex representation of the model, so let us detail it already in edge variables.

To begin with, we can impose the gauge invariant condition (\ref{gauge}) by group averaging a generic field $\vphi_{\ell} \in L^{2}(\SU(2)^{3})$:
\beq
\tvphi_\ell(g_1, g_2, g_3) = \int \extd h \, \vphi_\ell(hg_1, hg_2, hg_3) \equiv (\cP \act \vphi_\ell)(g_1, g_2, g_3)\,.
\eeq 
In Lie algebra variables, this translates as:
\beq\label{gauge_metric}
\tvphihat_\ell = \widehat{\cP \act \vphi_\ell} = \widehat{C} \star \widehat{\vphi}_\ell\, ,
\eeq
with:
\bes
\widehat{C}(x_1, x_2, x_3) &\equiv& \delta_{0}(x_1 + x_2 + x_3)\\ 
\delta_{x}(y) &\equiv& \int \extd h \, e_{g^{-1}}(x) e_{g}(y) \,.
\ees
The functions $\delta_{x}$ play the role of Dirac distributions in the sense that
\beq
\int \extd y \, (\delta_{x} \star f)(y) = f(x)
\eeq 
for any function $f$ in the image of the non-commutative Fourier transform. Thus, we see that the gauge invariance of the GFT field translates into the closure of the triangle corresponding to it, ensuring geometricity, in accordance with and in confirmation of the interpretation of the Lie algebra variables as edge vectors.

Writing the action in terms of the unconstrained fields $\vphi_\ell$, we notice that the projector $\cP$ induces a non-trivial kinetic term. It will therefore play the role of propagator at the quantum level. Explicitly, the partition function is defined with respect to the Gaussian measure $\mu_{\cP}$ of covariance $\cP$, or its equivalent in metric variables. Namely:
\beq
\cZ \equiv \int \extd \mu_{\cP}(\vphi_\ell, \overline{\vphi}_\ell) \, \e^{- S_{int}[\cP \act \vphi]} = \int \extd \mu_{\widehat{C}}(\vphihat_\ell, \overline{\vphihat}_\ell) \, \e^{- S_{int}[\widehat{C} \star \vphihat]}\,.
\eeq

This partition function generates amplitudes labelled by colored graphs, which we describe in the following. They are made of two types of four valent nodes, which we will call clockwise and anticlockwise, and graphically represent by black and white dots respectively. Their lines have colors $\ell \in \{1\cdots 4\}$, and on each node meet four lines with different colors. Moreover a line is always required to link a clockwise node to an anticlockwise one, so that the graph has no tadpole \cite{vincentcolored}. These conditions ensure that colored graphs are dual to simplicial complexes which triangulate orientable pseudo-manifolds \cite{Gurau:2010nd}, and this is the reason why these objects are well-known in combinatorial topology (see \cite{francesco} and references therein). In this picture lines are dual to triangles, and nodes to tetrahedra of the triangulation. Connected components of the graph made of lines of two different colors are dual to edges, whereas the vertices of the simplicial complex are obtained from the connected components made of three different colors. In GFT this combinatorial structure is usually encoded in a stranded substructure, to which geometrical variables are attached. More precisely, a line of color $\ell$, which represents the propagation of a field of color $\ell$, is made of three strands. These strands are themselves dual to the edges of the triangle the field represent, and we attach to them the corresponding group or Lie algebra variables. These strands are finally paired in nodes, following the pattern of a tetrahedron, as represented in figure (\ref{interaction_edge}). In this picture edges of the triangulation are dual to closed chains of strands. We give the simple example of the so-called \textit{sunshine graph} in figure (\ref{example_edge}). Its dual triangulation is made of two tetrahedra whose faces are identified pairwise, and has the topology of a sphere. 

The amplitude of a given graph $\cG$ can be given several interpretations, depending on the representation one chooses to work with. In metric variables, it has been shown in \cite{aristidedaniele} that it takes the form of a topological simplicial gravity path integral on the simplicial complex dual to $\cG$. In group variables, the amplitude is that of a gauge theory on the dual $2$-complex, imposing flatness of the gauge connection. Finally, we can obtain a third picture by expanding the functions on the group in irreducible representations using harmonic analysis. The amplitude of $\cG$ takes the form of a spin foam model, from which we can make contact with quantum geometry and Loop Quantum Gravity. We refer to the literature for more details \cite{gftreview}.

\begin{figure}[h]
\begin{center}
\includegraphics[scale=0.5]{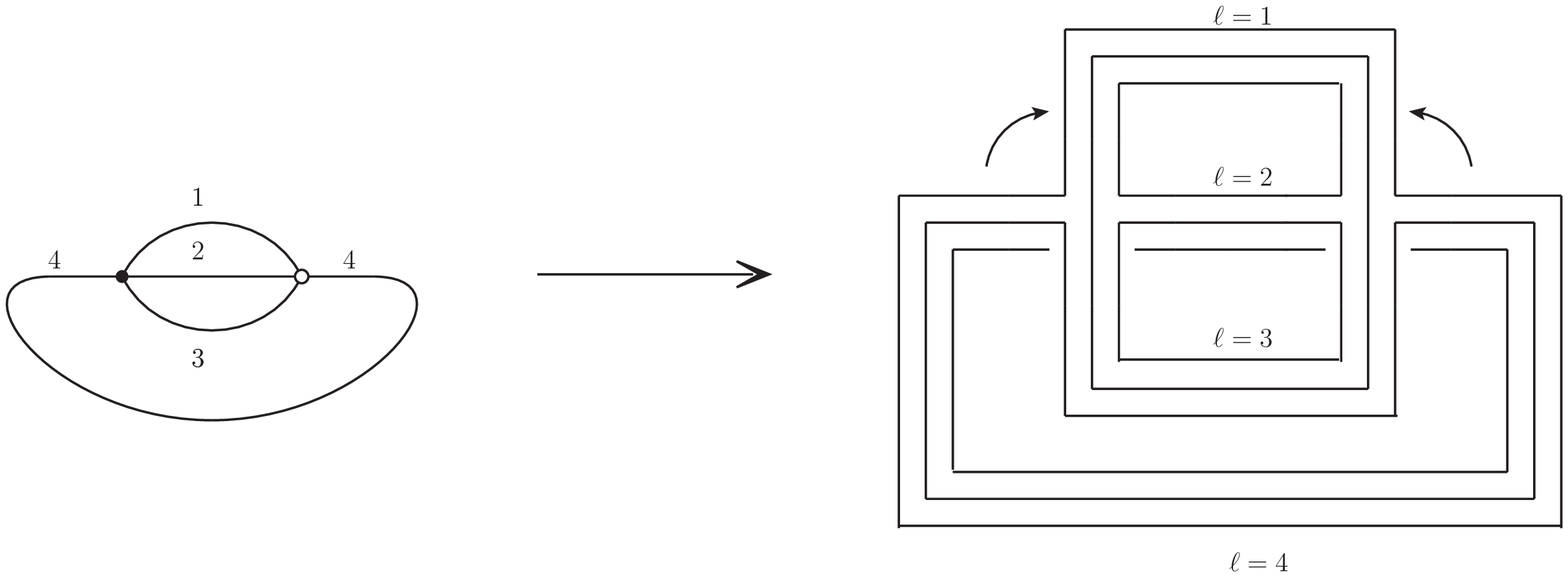} 
\caption{Combinatorial structure of the \textit{sunshine graph} in edge variables.} \label{example_edge}
\end{center}
\end{figure}

\subsection{Translation symmetries}

In the recent work \cite{diffeos}, the model was shown to respect (quantum) symmetries, given by actions of the Drinfel'd double $\cD \SO(3)\!=\!\cc(\SO(3))\rtimes \C\SO(3)$ on the fields. We will focus on the translational part of these actions, interpreted as (discrete) diffeomorphisms \cite{qRC,laurentdiffeo,biancadiffeos,biancabenny}.
They have four generators $\{\cT^{\ell'} , \ell'=1\cdots 4\}$, each $\cT^{\ell'}$ acting non-trivially on fields of color $\ell \neq \ell'$. For instance, $\cT^{3}$ acts on $\tvphi_1$ as:
\beq
\cT^{3}_{\varepsilon} \act \tvphi_1(g_1, g_2, g_3) \equiv (\e_{g_1^\inv} \star \e_{g_3})(\varepsilon) \, \tvphi_1(g_1, g_2, g_3) \, = \e_{g_1^\inv g_3}(\varepsilon) \, \tvphi_1(g_1, g_2, g_3).
\eeq

This can be interpreted as translations of the edges $1$ and $3$, respectively by $\varepsilon$ and $-\varepsilon$, with a deformation given by the $\star$-product. This is clearer in metric variables, where the previous equation can be (schematically) written as:
\beq
\cT^{3}_{\varepsilon} \act \tvphihat_1(x_1, x_2, x_3) = \bigstar_{\varepsilon} \, \tvphihat_1(x_1-\varepsilon, x_2, x_3+\varepsilon)\,. 
\eeq
As a result, the action of $\cT^{3}$ on the field of color $1$ can geometrically be interpreted as a deformed translation of one of its vertices, as represented in figure (\ref{trans_vertex}). Furthermore, we can assign colors to the vertices of the tetrahedron defining the interaction term, with the convention that $v_\ell$ should be the vertex opposed to the triangle of color $\ell$. This induces a color label for vertices of the different triangles. In this picture, the action of $\cT^{3}$ on $\tvphi_1$ corresponds to a translation of the vertex of color $3$ in the triangle of color $1$.

\begin{figure}[h]
\begin{center}
\includegraphics[scale=0.5]{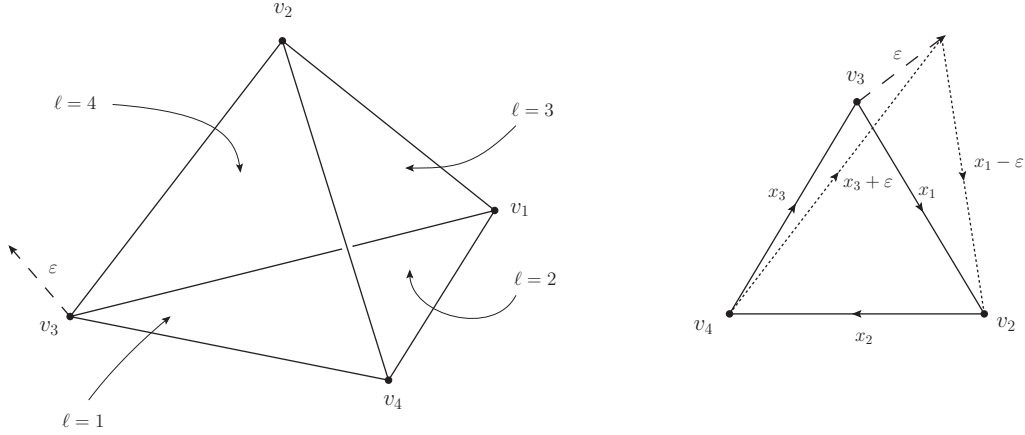} 
\caption{Action of $\cT^{3}_\epsilon$ on the interaction term, and resulting transformation of $\tvphi_1$.} \label{trans_vertex}
\end{center}
\end{figure}

This geometrical interpretation generalizes to any generator and any field: $\cT^{\ell'}_\varepsilon$ translates the vertex of color $\ell'$ in $\tvphi_\ell$ (if any) by a quantity $\varepsilon$. With our conventions, the symmetries are therefore given by the following equations:
\bes \label{VertexTranslation1}
\cT^{1}_{\varepsilon} \act \tvphi_1(g_1, g_2, g_3) &:=&  \tvphi_1(g_1, g_2, g_3) \nn \\
\cT^{1}_{\varepsilon} \act \tvphi_2(g_3, g_5, g_4) &:=& \e_{g_4^\inv g_5}(\varepsilon) \, \tvphi_2(g_3, g_5, g_4) \nn \\
\cT^{1}_{\varepsilon} \act \tvphi_3(g_5, g_2, g_6) &:=& \e_{g_5^\inv g_6}(\varepsilon) \, \tvphi_3(g_5, g_2, g_6) \nn \\
\cT^{1}_{\varepsilon} \act \tvphi_4(g_4, g_6, g_1) &:=& \e_{g_6^\inv g_4}(\varepsilon) \, \tvphi_4(g_4, g_6, g_1) \nn 
\ees
\bes
\cT^{2}_{\varepsilon} \act \tvphi_1(g_1, g_2, g_3) &:=& \e_{g_2^\inv g_1}(\varepsilon) \, \tvphi_1(g_1, g_2, g_3) \nn \\
\cT^{2}_{\varepsilon} \act \tvphi_2(g_3, g_5, g_4) &:=&  \tvphi_2(g_3, g_5, g_4) \nn \\
\cT^{2}_{\varepsilon} \act \tvphi_3(g_5, g_2, g_6) &:=& \e_{g_6^\inv g_2}(\varepsilon) \, \tvphi_3(g_5, g_2, g_6) \nn \\
\cT^{2}_{\varepsilon} \act \tvphi_4(g_4, g_6, g_1) &:=& \e_{g_1^\inv g_6}(\varepsilon) \, \tvphi_4(g_4, g_6, g_1) \nn 
\ees
\bes
\cT^{3}_{\varepsilon} \act \tvphi_1(g_1, g_2, g_3) &:=& \e_{g_1^\inv g_3}(\varepsilon) \, \tvphi_1(g_1, g_2, g_3) \nn \\
\cT^{3}_{\varepsilon} \act \tvphi_2(g_3, g_5, g_4) &:=& \e_{g_3^\inv g_4}(\varepsilon) \, \tvphi_2(g_3, g_5, g_4) \nn \\
\cT^{3}_{\varepsilon} \act \tvphi_3(g_5, g_2, g_6) &:=&  \tvphi_3(g_5, g_2, g_6) \nn \\
\cT^{3}_{\varepsilon} \act \tvphi_4(g_4, g_6, g_1) &:=& \e_{g_4^\inv g_1}(\varepsilon) \, \tvphi_4(g_4, g_6, g_1) \nn 
\ees
\bes
\cT^{4}_{\varepsilon} \act \tvphi_1(g_1, g_2, g_3) &:=& \e_{g_3^\inv g_2}(\varepsilon) \, \tvphi_1(g_1, g_2, g_3) \nn \\
\cT^{4}_{\varepsilon} \act \tvphi_2(g_3, g_5, g_4) &:=& \e_{g_5^\inv g_3}(\varepsilon) \, \tvphi_2(g_3, g_5, g_4) \nn \\
\cT^{4}_{\varepsilon} \act \tvphi_3(g_5, g_2, g_6) &:=& \e_{g_2^\inv g_5}(\varepsilon) \, \tvphi_3(g_5, g_2, g_6) \nn \\
\cT^{4}_{\varepsilon} \act \tvphi_4(g_4, g_6, g_1) &:=& \tvphi_4(g_4, g_6, g_1). \nn 
\ees
Note that the Hopf algebra deformations (i.e. the $\star$-products) are defined such that the plane-waves generating the translations are always of the form $\e_{g_i^\inv g_j}(\varepsilon)$. This feature has also a geometrical meaning: it guarantees that the transformed fields stay invariant under diagonal left action of $\SO(3)$, that is the triangles remain closed after translation of one of their vertices.

To be complete, we would need to specify how these translations act on products of fields. This step, which depends on the $\cD \SO(3)$ coproduct, again amounts to a choice of $\star$-product orderings of the plane waves resulting from the actions on individual fields. One result of \cite{diffeos} is that it is possible to define them in such a way that the action, and in particular its interaction term, are left invariant. We postpone this task to the next section, where the use of vertex variables will make the definitions more geometrically transparent.

The interpretation of these symmetries is very nice. As just mentioned, they are interpreted as translations of the vertices of the triangulation, which at the level of simplicial gravity are the discrete counterparts of the diffeomorphisms \cite{biancadiffeos}. At the discrete gauge field theory level, that is in group space, they impose triviality of the holonomy around a loop encircling a vertex of the boundary triangulation (this is apparent in the group representation of the GFT interaction vertex), which is the content of the diffeomorphism constraints of 3d gravity. Finally, in the spin foam formulation they generate the recurrence relations satisfied by $6j$-symbols \cite{BarrettCraneWdW,eterasimonevalentin}, which again encode the diffeomorphism invariance of the theory in algebraic language and the behaviour under coarse-graining. We refer to \cite{diffeos} for a detailed discussion of these aspects.

%
\section{Transformation to variables associated to vertices} \label{sec:vertex}
%

In this section we explain in details how the action can be reexpressed in terms of fields with vertex variables, as opposed to the usual edge variables. Such a formulation is suggested by the form of the translation symmetries, and makes it easier to analyze them. We will also see that it brings to the forefront some of the topological properties of the simplicial complexes generated by the model in perturbative expansion. This will be the key to the bounds on Feynman amplitudes we will derive in the following sections.

\subsection{From edge to vertex variables}

Following the interpretation of the symmetries as vertex translations, we write each edge Lie algebra variable as a difference between the positions of its two endpoints (with respect to some arbitrary reference point). Each triangle is therefore described by the three positions of its vertices. This amounts to representing quantized triangles by new fields $\tpsi_\ell$, defined as follows:
\bes
\forall \ell \in \left\lbrace 1,..,4\right\rbrace ,\, \tpsi_\ell(u, v, w) &\equiv& \int \extd g_1 \extd g_2 \extd g_3 \tvphi_\ell(g_1, g_2, g_3) \e_{g_2^\inv g_1}(u) \e_{g_1^\inv g_3}(v) \e_{g_3^\inv g_2}(w) \nn \\
&=& \bigstar_u \bigstar_v \bigstar_w \tvphihat_\ell(u - v, w - u , v - w)\,.
\ees
The form of $\tpsi_\ell$ is very specific, and we would not be able to define a measure on this space of fields. As it was already the case in edge variables, in order to define a measure on the space of such fields, a first step towards the definition of the dynamics is to write $\tpsi_\ell$ as a function of some generic field. More precisely, it can be defined after Fourier transform in group representation, in terms of new group variables $G_u := g_2^\inv g_1$, $G_v := g_1^\inv g_3$, $G_w := g_3^\inv g_2$. This gives (using the invariance (\ref{gauge}) of $\tvphi_\ell$):
\bes
\tpsi_\ell(u, v, w) &=& \int \extd G_v \extd G_w \vphi_\ell(G_v^\inv, G_w, \one) \e_{(G_v G_w)^\inv}(u) \e_{G_v}(v) \e_{G_w}(w) \nn \\
		&=& \int \extd G_u \extd G_v \extd G_w \delta(G_u G_v G_w) \vphi_\ell(G_v^\inv, G_w, \one) \e_{G_u}(u) \e_{G_v}(v) \e_{G_w}(w) \nn \\
		&=& \int \extd G_u \extd G_v \extd G_w \delta(G_u G_v G_w) \psi_\ell(G_u, G_v, G_w) \e_{G_u}(u) \e_{G_v}(v) \e_{G_w}(w) \nn \\
		&=& \int \extd G_u \extd G_v \extd G_w \int \extd \varepsilon \bigstar_{\varepsilon} \, \psi_\ell(G_u, G_v, G_w) \e_{G_u}(u + \varepsilon) \e_{G_v}(v + \varepsilon) \e_{G_w}(w + \varepsilon) \nn \\
		&=& \int \extd \varepsilon \bigstar_{\varepsilon} \, \hpsi_\ell(u + \varepsilon, v + \varepsilon, w + \varepsilon).  \nn 
\ees
In the last three lines we introduced an auxiliary field $\psi_\ell$ defined as:
\beq
\forall g_1, g_2, g_3 \in \SU(2), \, \psi_\ell(g_2^\inv g_1, g_1^\inv g_3, g_3^\inv g_2) \equiv \vphi_\ell(g_1, g_2, g_3) ,
\eeq
and the $\star$-products have to be taken in the correct order, namely from left to right.
The last line of the calculation has again a nice geometric interpretation. In usual edge variables, a triangle is specified by three edge vectors which are constrained to close. Alternatively here we give the positions of the vertices up to a global translation, which is irrelevant to the intrinsic geometry of the triangle.
The group variables $G_u$, $G_v$ and $G_w$ are holonomies associated to paths which go from the middle of one edge to the center of the triangle, and then to the middle of another edge. The triangle interpretation of the field requires the triviality of the product $G_u G_v G_w$, as shown in figure (\ref{change_variables}).

\begin{figure}[h]
\begin{center}
\includegraphics[scale=0.5]{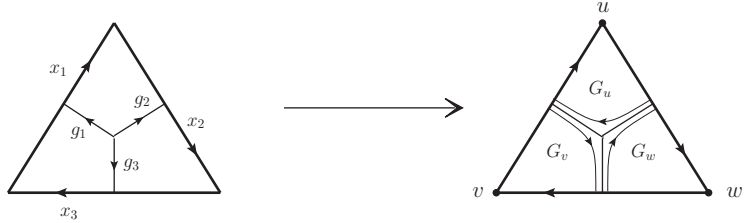} 
\caption{Map from edge to vertex variables.} \label{change_variables}
\end{center}
\end{figure}

\subsection{GFT Action in vertex variables}

\subsubsection{Action in terms of the constrained fields}

As already proven in \cite{diffeos}, the original Boulatov action can be re-written in terms of the new fields $\tpsi_\ell$. With the conventions of this paper, we have:
\bes
S_{kin}[\tpsi] &=& \sum_\ell \int [\extd^3 v_i]^2\, \tpsi_\ell(v_1, v_2, v_3)  \star {\overline{\tpsi}}_\ell(v_1, v_2, v_3) \,, \\
S_{int}[\tpsi] &=& \lambda \int  [\extd^3 v_i]^3\, \tpsi_1(- v_2, v_3, - v_4) \star \tpsi_2(- v_4, v_3, v_1) \star \tpsi_3(- v_4, v_1, - v_2) \star \tpsi_4(v_1, v_3, - v_2) \label{int_vertex} \\
&+& \overline{\lambda} \int  [\extd^3 v_i]^3\,{\overline{\tpsi}}_1(v_2, - v_3, v_4) \star {\overline{\tpsi}}_2(v_4,- v_3,- v_1) \star {\overline{\tpsi}}_3(v_4,- v_1, v_2) \star {\overline{\tpsi}}_4(- v_1,- v_3, v_2) \, .\nn
\ees
We notice that in all the integrals we have one free variable which can be fixed to any value without changing the value of the action; this amounts to a choice of origin from which measuring the position of the vertices. This is also reflected in the four translation symmetries not being independent, one of them being automatically verified when the others are imposed; in other words, the model knows about the intrinsic geometry of the triangles and of the tetrahedra they form, and correctly does not depend on their embedding in $\mathbb{R}^3$. 

We remark also that, in the interaction, each vertex variable appears in three different fields, so that we have a $\star$-product of three terms for each $v_\ell$. The extra signs encode orderings of the $\star$-products, which can again be interpreted as defining the orientations of the triangles. Consider for example the triangle of color $1$. From figure (\ref{trans_vertex}), its orientation is given by the cyclic ordering $(x_1, x_2, x_3)$ of its edge variables, which induces a natural cyclic ordering of its vertices: $(v_2, v_3, v_4)$ (notice that by convention, we actually choose the reverse ordering). This induces in turn an ordering of the triangles attached to the vertex $v_1$: $(\ell = 2, \ell =3, \ell = 4)$ (see again the left part of figure (\ref{trans_vertex})). This is the (cyclic) order in which, in the clockwise interaction term, the $\star_{v_1}$-product of fields having $v_1$ in their arguments (that is $\tpsi_2$, $\tpsi_3$ and $\tpsi_4$) has to be computed. In the anticlockwise interaction term, this has to be reversed. That is why the variable $v_1$ appears with a positive sign in the first interaction term, and a minus sign in the second. This discussion generalizes to any color, so that in the end signs in front of variables $v_\ell$ are fully determined by the ordering of variables in the field $\tpsi_\ell$ of the same color.

\subsubsection{Action in terms of the unconstrained fields}

Anticipating the next section, where we will need a well-defined measure on the space of fields, we now write the same GFT action in terms of unconstrained fields $\widehat{\psi}_\ell$. At this level it is easier to use group variables, that is actually write everything in terms of $\psi_\ell$. A direct computation shows that:
\bes
S[\psi] &=&  \sum_{l} \int [\prod_{i = 1}^{3} \extd G_i \prod_{i = 1}^{3} \extd \tilde{G}_i] \cK(G_i;\tilde{G}_i) \psi_\ell(G_1, G_2, G_3) \overline{\psi}_\ell(\tilde{G}_1, \tilde{G}_2, \tilde{G}_3) \nn \\
&+& \lambda \int [\prod_{\ell \neq \ell'} \extd G^l_{l'}] \cV(G^l_{l'}) \psi_1^{234} \psi_2^{431} \psi_3^{412} \psi_4^{132} \label{action_vertex} \\
&+& \overline{\lambda} \int [\prod_{\ell \neq \ell'} \extd G^l_{l'}] \cV(G^l_{l'}) {\overline{\psi}}_1^{\lower0.1in \hbox{\footnotesize234}} {\overline{\psi}}_2^{\lower0.1in \hbox{\footnotesize431}} {\overline{\psi}}_3^{\lower0.1in \hbox{\footnotesize412}} {\overline{\psi}}_4^{\lower0.1in \hbox{\footnotesize 132}}\nn ,
\ees
with $\psi_{\ell}^{ijk} \equiv \psi_{\ell}(G^{\ell}_i, G^{\ell}_j, G^{\ell}_k)$ and:
\bes \label{Feynmanfunctions}
\cK(G_1, G_2, G_3; \tilde{G}_1, \tilde{G}_2, \tilde{G}_3) &=& \delta(G_{1} G_{2} G_{3}) \delta(G_{1} {\tilde{G}_{1}}^\inv) \delta(G_{2} {\tilde{G}_{2}}^\inv) \delta(G_{3} {\tilde{G}_{3}}^\inv),  \\
\cV(G^l_{l'}) &=& \delta(G^{1}_{2} G^{1}_{3} G^{1}_{4})\delta(G^{2}_{4} G^{2}_{3} G^{2}_{1})\delta(G^{3}_{4} G^{3}_{1} G^{3}_{2})\delta(G^{4}_{1} G^{4}_{3} G^{4}_{2}) \nn \\
		&& \delta(G^{4}_{2} G^{3}_{2} G^{1}_{2})\delta(G^{4}_{3} G^{1}_{3} G^{2}_{3}) \delta(G^{1}_{4} G^{3}_{4} G^{2}_{4}). \label{int_func_vertex}
\ees 

\begin{figure}
\begin{center}
\includegraphics[scale=0.5]{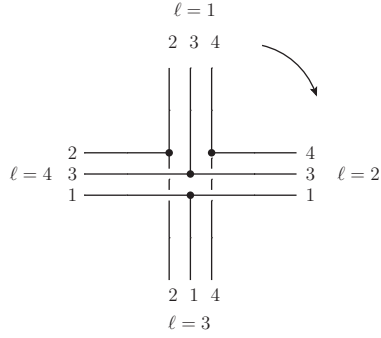} 
\caption{Combinatorics of the interaction function in vertex variables. One of the four three valent interactions is redundant. The arrow indicates the ordering of variables.} \label{interaction_vertex}
\end{center}
\end{figure}

Recall that we previously assigned colors to the vertices in the interaction tetrahedron, with the following simple convention: the vertex $v_\ell$ is the one opposed to the triangle of color $\ell$. We use this labelling in (\ref{int_func_vertex}), where upper indices correspond to colors of the triangles whereas lower ones correspond to that of the vertices. The first four $\delta$-functions come from the translation invariance of the triangles, and the three others encode their gluing through the vertices. In the last line, the fact that there are only three of the four possible $\delta$-functions is again because the four symmetries are not independent. We could alternatively add a $\delta(G^{2}_{1} G^{3}_{1} G^{4}_{1})$ and remove one of the three other $\delta$-functions, a freedom which will prove useful in the computation of the amplitudes. 

For completeness, we finally give the action in the Lie algebra setting:
\bes\label{action_vertex_metric}
S[\hpsi] &=&  \sum_{l} \int [\prod_{i = 1}^{3} \extd v_i \prod_{i = 1}^{3} \extd \tilde{v}_i] \left( \hpsi_\ell(v_1, v_2, v_3) \overline{\hpsi}_\ell(\tilde{v}_1, \tilde{v}_2, \tilde{v}_3)\right)  \star \cK(v_i;\tilde{v}_i) \nn \\
&+& \lambda \int [\prod_{\ell \neq \ell'} \extd v^l_{l'}] \left( \hpsi_1^{234} \hpsi_2^{431} \hpsi_3^{412} \hpsi_4^{132}\right) \star \cV(v^l_{l'}) \nn \\
&+& \overline{\lambda} \int [\prod_{\ell \neq \ell'} \extd v^l_{l'}] \left( {\overline{\hpsi}}_1^{\lower0.1in \hbox{\footnotesize234}} {\overline{\hpsi}}_2^{\lower0.1in \hbox{\footnotesize431}} {\overline{\hpsi}}_3^{\lower0.1in \hbox{\footnotesize412}} {\overline{\hpsi}}_4^{\lower0.1in \hbox{\footnotesize 132}}\right) \star \cV(- v^l_{l'})\nn ,
\ees
with $\hpsi_{\ell}^{ijk} \equiv \hpsi_{\ell}(v^{\ell}_i, v^{\ell}_j, v^{\ell}_k)$, and:
\bes
\cK(v_1, v_2, v_3; \tilde{v}_1, \tilde{v}_2, \tilde{v}_3) &=& \int \extd \varepsilon \left( \delta_0(\varepsilon - v_1 + \tilde{v}_{1}) \star_\varepsilon \delta_0(\varepsilon - v_2 + \tilde{v}_{2}) \star_\varepsilon \delta_0(\varepsilon - v_3 + \tilde{v}_{3}) \right)  \, , \nn \\
\cV(v^l_{l'}) &=& \int [\prod_{\ell = 1}^{4} \extd \varepsilon^{\ell}] \int [\prod_{\ell' = 1}^{4} \extd v_{\ell'}] \, \delta_0(v_{\ell_0}) \nn \\
	&\star_{v_{\ell'}}& \left( \delta_0(\varepsilon^1 - v^{1}_{2} - v_2 ) \star_{\varepsilon^1} \delta_0(\varepsilon^1 - v^{1}_{3} + v_3 ) \star_{\varepsilon^1} \delta_0(\varepsilon^1 - v^{1}_{4} - v_4 ) \right) \nn \\
	&\star_{v_{\ell'}}& \left( \delta_0(\varepsilon^2 - v^{2}_{4}  - v_4 ) \star_{\varepsilon^2} \delta_0(\varepsilon^2 - v^{2}_{3} + v_3 ) \star_{\varepsilon^2} \delta_0(\varepsilon^2 - v^{2}_{1} + v_1 ) \right) \nn \\
	&\star_{v_{\ell'}}& \left( \delta(_0\varepsilon^3 - v^{3}_{4}  - v_4 ) \star_{\varepsilon^3} \delta_0(\varepsilon^3 - v^{3}_{1} + v_1 ) \star_{\varepsilon^3} \delta_0(\varepsilon^3 - v^{3}_{2} - v_2 ) \right) \nn \\
	&\star_{v_{\ell'}}& \left( \delta_0(\varepsilon^4 - v^{4}_{1} + v_1 ) \star_{\varepsilon^4} \delta_0(\varepsilon^4 - v^{4}_{3} + v_3 ) \star_{\varepsilon^4} \delta_0(\varepsilon^4 - v^{4}_{2} - v_2 ) \right) \nn \, .
\ees 
It is obtained from (\ref{action_vertex}) by first expanding the fields $\psi_\ell$ in terms of their Fourier transforms $\hpsi_\ell$ \footnote{The inverse formula is given by
$f(g) = \frac{1}{\pi} \int \extd x (\widehat{f} \star \e_{g^{\inv}}) (x)$ for a function $f$ on $\SO(3)$. We refer to \cite{majidfreidel, aristidedaniele} for details.}, then decomposing the group $\delta$-functions in plane-waves, and finally integrating the holonomies. Now the different gluings are encoded by non-commutative $\delta$-functions on the Lie algebra. For example the propagator determines the gluing of two triangles through their vertices $v$ and $\tilde{v}$, up to a global translation parametrized by $\varepsilon$. Likewise in the interaction, the $\varepsilon^{\ell}$ variables are associated to global translations of the triangles of color $\ell$. As for the variables $v_{\ell'}$, they give $3$-valent interactions on strands of color $\ell'$. Note that the $\delta$-function appearing in the measure has $v_{\ell_0}$ for argument, with $\ell_0$ any of the four colors. This is how the fact that the four $3$-valent interactions are not independent manifests itself: once the triangles have been glued along three of the vertices of the tetrahedron, the fourth gluing is automatic. Finally, signs in front of variables $v_{\ell'}$ implement the correct ordering of $\star$-products, that is the orientations of the triangles.

\subsection{Translation symmetries in vertex variables}

Now we have a vertex representation of the classical theory, it is interesting to revisit and discuss further the translation symmetries. As expected, we have simpler formulas in this representation. 

Let us first discuss the action of translations on individual fields. We can equivalently work in a group or algebra picture, and also with constrained or unconstrained fields, this is irrelevant here. For definiteness we use the fields $\hpsi_\ell$. The transformations read:
\bes \label{VertexTranslation2}
\cT^{1}_{\varepsilon} \act \hpsi_1(v_2, v_3, v_4) &=&  \hpsi_1(v_2, v_3, v_4) \nn \\
\cT^{1}_{\varepsilon} \act \hpsi_2(v_4, v_3, v_1) &=&  \hpsi_2(v_4, v_3, v_1 + \varepsilon) \nn \\
\cT^{1}_{\varepsilon} \act \hpsi_3(v_4, v_1, v_2) &=&  \hpsi_3(v_4, v_1 + \varepsilon, v_2) \nn \\
\cT^{1}_{\varepsilon} \act \hpsi_4(v_1, v_3, v_2) &=&  \hpsi_4(v_1 + \varepsilon, v_3, v_2) \nn 
\ees
\bes
\cT^{2}_{\varepsilon} \act \hpsi_1(v_2, v_3, v_4) &=&  \hpsi_1(v_2 + \varepsilon, v_3, v_4) \nn \\
\cT^{2}_{\varepsilon} \act \hpsi_2(v_4, v_3, v_1) &=&  \hpsi_2(v_4, v_3, v_1) \nn \\
\cT^{2}_{\varepsilon} \act \hpsi_3(v_4, v_1, v_2) &=&  \hpsi_3(v_4, v_1, v_2 + \varepsilon) \nn \\
\cT^{2}_{\varepsilon} \act \hpsi_4(v_1, v_3, v_2) &=&  \hpsi_4(v_1, v_3, v_2 + \varepsilon) \nn 
\ees
\bes
\cT^{3}_{\varepsilon} \act \hpsi_1(v_2, v_3, v_4) &=&  \hpsi_1(v_2, v_3 + \varepsilon, v_4) \nn \\
\cT^{3}_{\varepsilon} \act \hpsi_2(v_4, v_3, v_1) &=&  \hpsi_2(v_4, v_3 + \varepsilon, v_1) \nn \\
\cT^{3}_{\varepsilon} \act \hpsi_3(v_4, v_1, v_2) &=&  \hpsi_3(v_4, v_1, v_2) \nn \\
\cT^{3}_{\varepsilon} \act \hpsi_4(v_1, v_3, v_2) &=&  \hpsi_4(v_1, v_3 + \varepsilon, v_2) \nn 
\ees
\bes
\cT^{4}_{\varepsilon} \act \hpsi_1(v_2, v_3, v_4) &=&  \hpsi_1(v_2, v_3, v_4 + \varepsilon) \nn \\
\cT^{4}_{\varepsilon} \act \hpsi_2(v_4, v_3, v_1) &=&  \hpsi_2(v_4 + \varepsilon, v_3, v_1) \nn \\
\cT^{4}_{\varepsilon} \act \hpsi_3(v_4, v_1, v_2) &=&  \hpsi_3(v_4 + \varepsilon, v_1, v_2) \nn \\
\cT^{4}_{\varepsilon} \act \hpsi_4(v_1, v_3, v_2) &=&  \hpsi_4(v_1, v_3, v_2) \nn 
\ees
Thus each field $\hpsi_\ell$ can be interpreted as living in the representation space of (the translation part of) three copies of the deformed 3d Poincare group $\cD \SO(3)$. This makes the interpretation of these transformations as vertex translations more explicit, and clarifies the very definition of the GFT.

The deformation of the translations manifests itself when acting on product of fields. This is a question we left open in the previous sections, exactly because it is more easily understood in the vertex formulation. To define the action of the translations on a product of fields, we need to interpret it as a tensor product. There is no canonical choice: for example the integrand $\psi_1^{234} \psi_2^{431} \psi_3^{412} \psi_4^{132}$ in the interaction term (\ref{int_vertex}) can be interpreted as the evaluation of $\psi_1^{234} \ot \psi_2^{431} \ot \psi_3^{412} \ot \psi_4^{132}$, but also of $\psi_2^{431} \ot \psi_1^{234} \ot \psi_3^{412} \ot \psi_4^{132}$, and generally of any permutation of the representation spaces. The Hopf algebra deformation of the translations required to make the interaction invariant will then depend on this additional convention. 
For definiteness let us interpret the term $\psi_1^{234} \psi_2^{431} \psi_3^{412} \psi_4^{132}$ as the evaluation of $\psi_1^{234} \ot \psi_2^{431} \ot \psi_3^{412} \ot \psi_4^{132}$. The Hopf algebra structure of the symmetries then has to be consistent with orderings of $\star$-products (i.e. signs) in equation (\ref{int_vertex}). This requires to distinguish colors $\{1,3\}$ from $\{2,4\}$, since the corresponding variables have opposite signs in (\ref{int_vertex}). 
All this suggests the following definition of translations, on products of fields, which we give in group variables. If $\{\phi_i, \, i =1, \cdots, N\}$ are living in the representation space of $\cT^{\ell}$, then:
\bes
\cT^{\ell}_{\varepsilon} \act (\phi(g_1) \ot \cdots \ot \phi(g_N)) &\equiv& \e_{g_1 \cdots g_N} (\varepsilon) (\phi(g_1) \ot \cdots \ot \phi(g_N))\,, \qquad \rm{if} \; \ell \in \{1, 3 \} \\
\cT^{\ell}_{\varepsilon} \act (\phi(g_1) \ot \cdots \ot \phi(g_N)) &\equiv& \e_{g_N \cdots g_1} (\varepsilon) (\phi(g_1) \ot \cdots \ot \phi(g_N))\,, \qquad \rm{if} \; \ell \in \{2, 4 \}\,.
\ees 
With this definition, and the tensor product interpretation of the interaction term we gave, the action is indeed invariant under translations. For instance, in metric variables, the integrand of the interaction part of the action is simply translated with respect to its variable of color $\ell$ under the transformation $\cT^{\ell}$. As a result, and because it is defined by integrals over the whole space $\su(2)$, the invariance follows\footnote{The theory can be made to be independent of any such choice, if one introduce an appropriate non-trivial braiding among GFT fields, which intertwines the translation symmetry; this issue has been raised already in \cite{diffeos} (and in \cite{eteraflorian}) and it is currently under investigation.}.

%
\subsection{Quantum theory}
%

In this section we discuss the path integral quantization of the model, via its perturbative expansion in Feynman diagrams. The partition function of the Boulatov model being divergent, we will regularize it by introducing a suitable cut-off already at the level of the action.

\subsubsection{Cut-off and rescaling of the fields}
There are of course several ways of regularizing the action. Here we choose to regularize the $\delta$-functions to $\delta^{\Lambda}$, $\Lambda$ being a sharp large spin cut-off in the harmonic expansion of $\delta$. The reason is that we would like to make contact with all the bounds we know for the amplitudes of the theory, which in particular allowed to define a $1/N$ expansion \cite{RazvanN, RazvanVincentN, RazvanFullN}. An alternative would be to use a heat kernel regularization, as for example in \cite{ValentinMatteo1, ValentinMatteo2, ValentinMatteo3}. The latter should be better adapted to the metric variables, since it is equivalent to a regularization of the $\su(2)$ integrals compatible with the $\star$-product (the heat kernel becomes a non-commutative gaussian function in Lie algebra space). However for actual computations, the group picture looks better suited, and the large spin cut-off is very natural.

In addition, we rescale the fields and the coupling constant as:
\bes
\psi_\ell \mapsto \frac{\psi_\ell}{\sqrt{\delta^\Lambda(\one)}}  \\
\lambda \mapsto \frac{\lambda}{\sqrt{\delta^\Lambda(\one)}}.\label{rescaling_coupling}
\ees
The first ensures that the kinetic function defines a projector, and the second allows to obtain a uniform degree of divergence for the maximally divergent graphs at all orders \cite{RazvanN}. The cut-off kinetic and interaction functions we will use are then:
\bes
\cK^{\Lambda}(G_1, G_2, G_3; \tilde{G}_1, \tilde{G}_2, \tilde{G}_3) &=& \frac{\delta^\Lambda(G_{1} G_{2} G_{3})}{\delta^\Lambda(\one)} \delta^\Lambda(G_{1} {\tilde{G}_{1}}^\inv) \delta^\Lambda(G_{2} {\tilde{G}_{2}}^\inv) \delta^\Lambda(G_{3} {\tilde{G}_{3}}^\inv),  \nn \\
\cV^{\Lambda}(G^l_{l'}) &=& (\delta^\Lambda(\one))^{\frac{3}{2}} \frac{\delta^\Lambda(G^{1}_{2} G^{1}_{3} G^{1}_{4})}{\delta^\Lambda(\one)} \frac{\delta^\Lambda(G^{2}_{4} G^{2}_{3} G^{2}_{1})}{\delta^\Lambda(\one)} \frac{\delta^\Lambda(G^{3}_{4} G^{3}_{1} G^{3}_{2})}{\delta^\Lambda(\one)} \frac{\delta^\Lambda(G^{4}_{1} G^{4}_{3} G^{4}_{2})}{\delta^\Lambda(\one)} \\
		&& \delta^\Lambda(G^{4}_{2} G^{3}_{2} G^{1}_{2})\delta^\Lambda(G^{4}_{3} G^{1}_{3} G^{2}_{3})\delta^\Lambda(G^{1}_{4} G^{3}_{4} G^{2}_{4}). \nn
\ees 

\subsubsection{Partition function and Feynman rules}
As pointed out in \cite{scaling3d} one can adopt the same strategy as in usual quantum field theory, and make sense of the partition function as a well-defined Gaussian integral. Roughly, the ill-defined Lebesgue measure on the space of fields is combined with the kinetic function, giving an integral of the exponential of the interaction term with respect to a Gaussian measure whose covariance is the kinetic function:
\beq\label{path_int_v}
\cZ^{\Lambda} = \int \extd \mu_{\cK^{\Lambda}}(\psi_\ell, \overline{\psi}_\ell) \, \e^{- S_{int}[\psi]}.
\eeq

The amplitudes are given by the usual colored graphs. What differs is that now the stranded structure is associated to the vertices of the corresponding triangulation. The propagator being a projector we can discard its contribution to the interaction function. We end up with the Feynman rules represented in (\ref{feynman_rules}).
\begin{figure}[h]
\begin{center}
\includegraphics[scale=0.5]{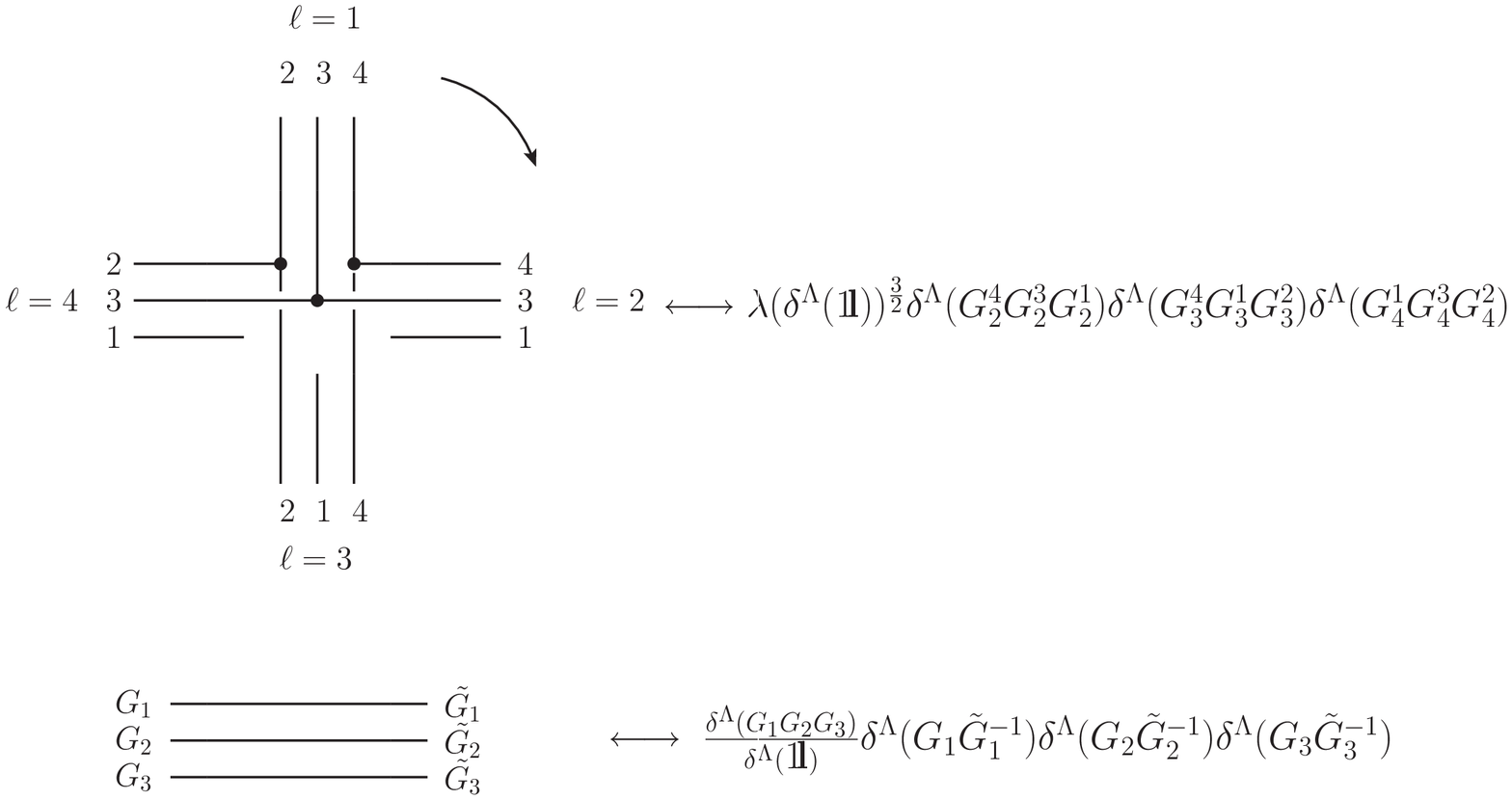} 
\caption{Feynman rules for the clockwise vertex and the propagators. In the second picture, as a matter of convention, we attached the left part of the propagator to a clockwise interaction vertex, and the right part to an anticlockwise one.} \label{feynman_rules}
\end{center}
\end{figure}

At this point we would like to stress again that the path integral is strictly the same than the usual one written in edge variables. The reason why it is so is that at the level of gauge invariant fields, our constructions amounts to a simple (and regular) change of variables in the fields. Therefore the Jacobian of the transformation evaluates to one. This ensures that when the cut-off are removed, the path integrals (and Feynman amplitudes) in edge and vertex variables are the same.
Moreover, we could equally do the very same construction, starting instead from the regularized path integral in edge variables. Knowing the expression for the change of variables in group picture, we could even avoid using metric variables altogether. Starting with regularized $\delta^{\Lambda}$-functions in the path integral in edge variables, we would end up with the regularized path integral (\ref{path_int_v}) in vertex variables. This is ensured by the properties of the regularized $\delta^{\Lambda}$-functions, which behave as Dirac distributions on the space of fields with cut-off $\Lambda$. As a consequence, the vacuum amplitudes in vertex variables will be the same as in edge variables, since in both cases they are coefficients in the perturbative expansion of the partition function in $\lambda$. We would need more care in the case of open graphs, since this would require to match boundary states in both pictures. We do not address this issue in this paper, and focus on vacuum amplitudes in the following.

%
\section{Quantum amplitudes} \label{sec:bubble}
%

In this section we focus on the Feynman amplitudes of the model, by first looking at the combinatorics of the graphs in vertex variables. Then, we will factorize the amplitudes in contributions coming from bubbles of a given color. This will be the starting point for our derivation of new scaling bounds depending on the topology of bubbles only, thus characterizing the manifold-ness of the corresponding simplicial complexes.

\subsection{Explicit bubble structure}

The first thing to notice at this point is that the (contributions to the full) GFT interaction associated to a single vertex are 3-valent, and only strands with the same (vertex) color can interact. Each of them encodes the gluing of three triangles on a vertex of the triangulation, or alternatively the triviality of the holonomy around the wedge associated to this vertex (see figure \ref{flatness}). 
Now look at a connected component of the subgraph of color $\ell$. This is a graph of a non-commutative $\Phi^3$ scalar field theory on a Lie algebra spacetime $\su(2)$, with momentum space $\SU(2)$, the interaction being essentially momentum conservation at each vertex. From the simplicial perspective, it is dual to a bubble around a vertex of color $\ell$, and encodes its topological structure. Precisely, each line of this 3-graph has a color: that of the 4-graph line it is part of. Thus we really have a colored 3-graph, which is therefore dual to a closed and orientable triangulated surface \cite{Vince,lrd,diffeos}:  the bubble.
The overall amplitude associated to a 4-graph is therefore given by $\Phi^3$ graphs encoding the structure of the bubbles, glued to one another through propagators (associated to triangles). We give a simple example in figure (\ref{example_graph}).

\begin{figure}[h]
\begin{center}
\includegraphics[scale=0.5]{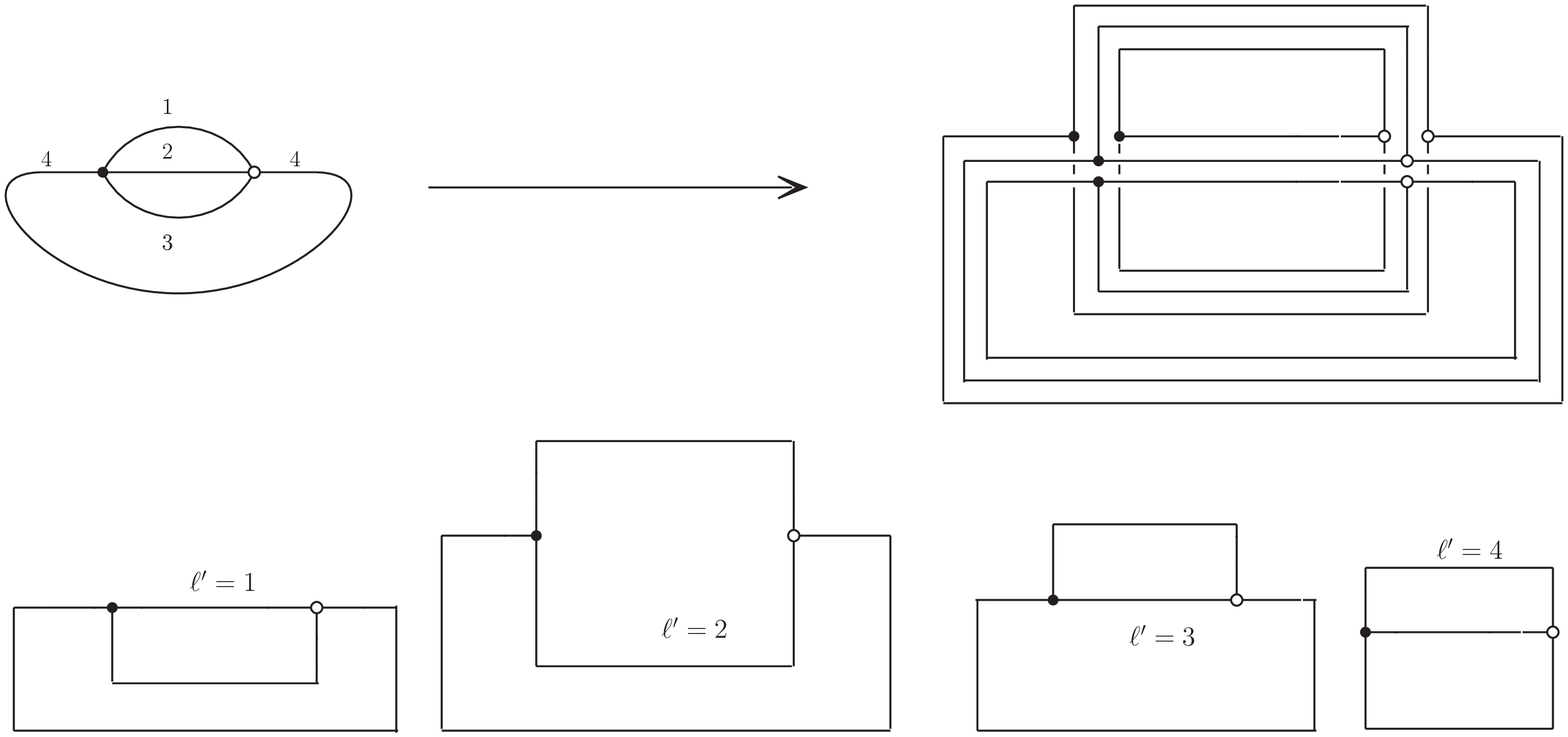} 
\caption{Combinatorial structure of the sunshine graph in vertex variables, and its four bubble graphs.} \label{example_graph}
\end{center}
\end{figure}

\subsection{Flatness of the triangulation}

Before writing the amplitudes in a nice compact form, let us see what the geometrical meaning of the different terms is. In figure (\ref{flatness}) we represent a tetrahedron of the triangulation, dual to an interaction vertex. The elementary variables are holonomies on paths around the vertices of the different triangles, as already shown in figure (\ref{change_variables}). We have two types of constraints on these variables, coming respectively from the propagators and the stranded interactions. The first set of constraints ensure the fields can indeed be interpreted as triangles. As for the interaction associated to the gluing on a vertex $v$, it imposes flatness of the surface formed by the three wedges around $v$.

\begin{figure}[h]
\begin{center}
\includegraphics[scale=0.5]{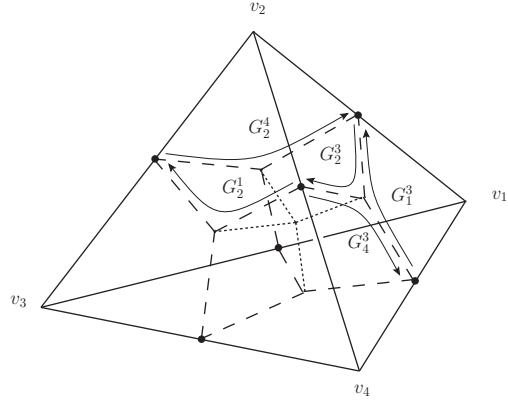} 
\caption{Tetrahedron dual to an interaction vertex. The amplitude imposes two kind of conditions: consistency conditions on triangles, for instance $G^{3}_4 G^{3}_1 G^{3}_2 = \one$ in the triangle of color $3$; and flatness conditions around vertices, for example $G^{4}_2 G^{3}_2 G^{1}_2 = \one$ around the vertex $v_2$.} \label{flatness}
\end{center}
\end{figure}

Before moving on, we give a final confirmation that we recover the usual interpretation of BF amplitudes as measuring moduli spaces of flat connections, in terms of flatness of holonomies around closed paths dual to edges of the simplicial complex. To do so, we transform back to edge variables at the level of the amplitudes, and prove that holonomies around edges of the triangulation are trivial. We give a sketchy proof for a single tetrahedron with boundaries, by showing that its six wedges are flat. The triviality of the propagator ensures that this translates into flatness of face holonomies built by gluing wedges together.

Consider a single tetrahedron with boundary holonomies $\{G^{\ell}_{\ell'}, \ell \neq \ell'\}$ in vertex variables, with the same notations as before. They verify constraints ensuring the triangle interpretation of the boundary fields: for instance $G^{1}_2 G^{1}_3 G^{1}_4 = \one$. As a consequence there exist variables $g^{1}_1$, $g^{1}_2$ and $g^{1}_3$ such that:
\beq
G^{1}_2 = (g^{1}_{2})^{\inv} g^1_1\,, \qquad G^{1}_3 = (g^{1}_{1})^{\inv} g^1_3\,, \qquad G^{1}_4 = (g^{1}_{3})^{\inv} g^1_2\,,
\eeq 
and similarly for the other colors. Note that we label the edges of the tetrahedron (lower indices of $g$ variables) as in figure (\ref{tetrahedron_edge}). Discarding global prefactors, the amplitude is a product of three $\delta$-functions encoding flatness around the vertices $v_1$, $v_2$ and $v_3$:
\bes\label{ampl_vertex}
\cA &\propto& \delta^\Lambda(G^{4}_{2} G^{3}_{2} G^{1}_{2})\delta^\Lambda(G^{4}_{3} G^{1}_{3} G^{2}_{3})\delta^\Lambda(G^{1}_{4} G^{3}_{4} G^{2}_{4}) \\
&\propto& \delta^\Lambda((g^{4}_{1})^{\inv} g^{4}_{6} (g^{3}_{6})^{\inv} g^{3}_{2} (g^{1}_{2})^{\inv} g^1_1)\delta^\Lambda((g^{4}_{4})^{\inv} g^{4}_{1} (g^{1}_{1})^{\inv} g^1_3 (g^{2}_{3})^{\inv} g^{2}_{4})\delta^\Lambda((g^{1}_{3})^{\inv} g^1_2 (g^{3}_{2})^{\inv} g^{3}_{5} (g^{2}_{5})^{\inv} g^{2}_{3})\,.
\ees
On the other hand in usual edge variables, the amplitude is given by:
\bes\label{ampl_edge}
\tilde{\cA} = \int \extd h_1 \extd h_2 \extd h_3 \extd h_4 &&\delta^\Lambda(h_1 g^{1}_{1} (g^{4}_{1})^{\inv} h_4^{\inv}) \delta^\Lambda(h_4 g^{4}_{6} (g^{6}_{3})^{\inv} h_3^{\inv}) \delta^\Lambda(h_3 g^{3}_{2} (g^{1}_{2})^{\inv} h_1^{\inv}) \nn \\ 
&& \delta^\Lambda(h_1 g^{1}_{3} (g^{2}_{3})^{\inv} h_2^{\inv}) \delta^\Lambda(h_3 g^{3}_{5} (g^{2}_{5})^{\inv} h_2^{\inv}) \delta^\Lambda(h_4 g^{4}_{4} (g^{2}_{4})^{\inv} h_2^{\inv})\,, 
\ees
where for each color $\ell$, $h_{\ell}$ is interpreted as the holonomy from the center of the tetrahedron to the center of the triangle of color $\ell$. Thus the amplitude encodes flatness of the six wedges of the tetrahedron. To show that the amplitudes (\ref{ampl_vertex}) and (\ref{ampl_edge}) are equivalent as they should, we can first integrate the auxiliary holonomies $h_2$, $h_3$ and $h_4$ in (\ref{ampl_edge}). The result does not depend on $h_1$ so the last integral is trivial, and we get:
\beq
\tilde{\cA} = \delta^\Lambda(g^{1}_{1} (g^{4}_{1})^{\inv} g^{4}_{6} (g^{3}_{6})^{\inv} g^{3}_{2} (g^{1}_{2})^{\inv})
		\delta^\Lambda(g^{1}_{1} (g^{4}_{1})^{\inv} g^{4}_{6} (g^{3}_{6})^{\inv} g^{3}_{5} (g^{2}_{5})^{\inv} g^{2}_{3} (g^{1}_{3})^{\inv})
		\delta^\Lambda(g^{1}_{1} (g^{4}_{1})^{\inv} g^{4}_{4} (g^{2}_{4})^{\inv} g^{2}_{3} (g^{1}_{3})^{\inv})\,.
\eeq
But this is the same distribution as (\ref{ampl_vertex}), as it can be easily verified. This shows that amplitudes in vertex and edge variables are equal, and both encode flatness of the triangulation.

\subsection{``Reduced'' amplitude}

\begin{figure}[h]
\begin{center}
\includegraphics[scale=0.5]{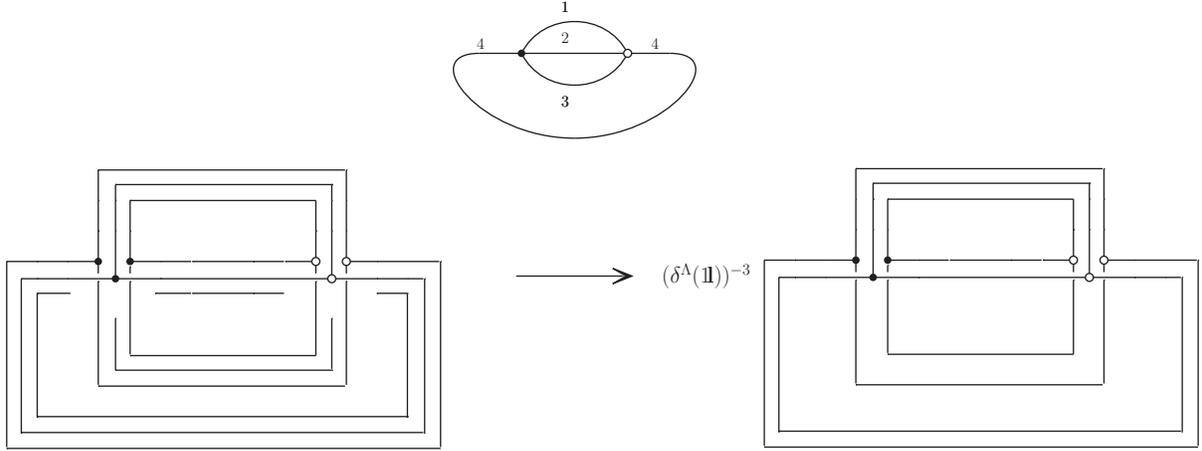} 
\caption{Integration of propagators of colors $2$, $3$ and $4$ in the sunshine graph. A prefactor $(\delta^\Lambda(\one))^{-\frac{3}{2}}$ can be absorbed in each vertex function, which gives the modified Feynman rule shown in figure (\ref{feynman_rules_reduced}).} \label{example_reduced_amplitude}
\end{center}
\end{figure}

Let us consider a graph $\cG$ and compute its amplitude. In each vertex we are free to choose the color of the three strands which do not interact. A simple choice is to pick up the same color in every vertex, say $1$. Now because we have no interaction for strands of color $1$, we can use the variables attached to them to integrate all the propagators of colors $2$, $3$ and $4$ (see figure (\ref{example_reduced_amplitude}) for an example). Each of them gives a contribution $(\delta^\Lambda(\one))^{-1}$ coming from the normalization of the propagator. We can absorb these factors in the interaction terms. Since each integrated line is shared by two interaction vertices, and each interaction vertex has three integrated lines, this amounts to a rescaling of the interaction function by a factor $(\delta^\Lambda(\one))^{-\frac{3}{2}}$. The Feynman rule for the reduced interaction vertex is represented in figure (\ref{feynman_rules_reduced}).

\begin{figure}[h]
\begin{center}
\includegraphics[scale=0.5]{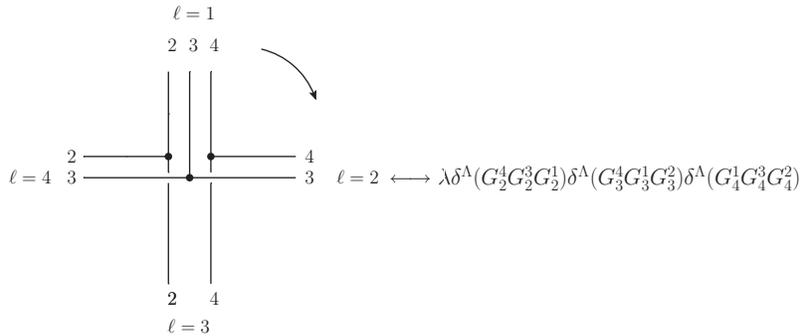} 
\caption{Feynman rule for the reduced clockwise interaction vertex.} \label{feynman_rules_reduced}
\end{center}
\end{figure}

The only propagators left have color $1$, and encode gluings of bubbles of the same color. In each of these bubbles we have now subgraphs of colors $2$, $3$ and $4$ only, which interact through their external strands.

Let us call $\cB_{1}$ the set of bubbles of color $1$. Then for $b \in \cB_{1}$ we note: $V_{b}$, $E_{b}$, $F_{b}$ the sets of vertices, edges and faces of its triangulation; $g_{b}$ its genus; $\cG_{b}$ its (disconnected and open) graph, made of strands of colors $2$, $3$ and $4$. Finally, for $v \in V_{b}$, we define $\triangle^{b}_{v}$ the sets of triangles of $b$ containing $v$. The following combinatorial properties hold:
\begin{enumerate}
 \item $\cG_{b}$ has $3 |F_{b}|$ external strands, and $3 |F_{b}|$ external vertices (all its vertices have to be external since each of them is connected to a propagator of color $1$);
\item Each internal strand of $\cG_{b}$ is dual to one of the end points of an edge of $b$. There are $2 |E_{b}|$ such internal strands;
\item They form connected components of $\cG_{b}$, which are dual to the vertices of $b$. We have therefore $|V_{b}|$ connected components in $\cG_{b}$. The connected component dual to $v \in V_{b}$ has $|\triangle^{b}_{v}|$ strands, which are dual to the triangles of $\triangle^{b}_{v}$.
\end{enumerate}
From this considerations, we see that all the internal strands of $\cG_{b}$ can be integrated. This simply amounts to integrating a loop in a $\Phi^3$ graph, reducing it to one single vertex. We show an example in figure (\ref{strand_reduction}).

\begin{figure}[h]
\begin{center}
\includegraphics[scale=0.5]{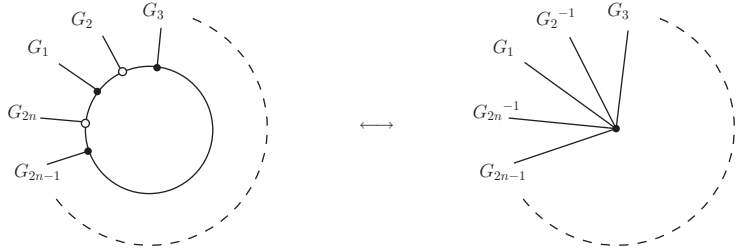} 
\caption{Integration of the internal strands of a connected component of $\cG_{b}$.} \label{strand_reduction}
\end{center}
\end{figure}

Each connected component of $\cG_{b}$ is thus reduced to a unique interaction vertex of valence $|\triangle^{b}_{v}|$, dual to the vertex of the triangulation $v \in V_{b}$. We are left with only one $\delta$-function per vertex of $b$. Defining $\cF_1$ as the set of lines of color $1$, we obtain a general formula for the amplitude:
\beq
\cA^{\cG} = (\lambda \overline{\lambda})^{\frac{\cN }{2}}  \int [\extd G]^{\frac{3 \cN}{2}} \left( \prod_{b \in \cB_{1}} \prod_{v \in V_{b}} \delta^\Lambda\left( \overrightarrow{\prod_{f \in \triangle^{b}_{v}}} (G_{v}^{f})^{\epsilon^{f}_{v}}\right)  \right) \left( \prod_{f \in \cF_{1}} \frac{\delta^\Lambda\left( \overrightarrow{\prod_{v \in f}} G_{v}^{f}\right)}{\delta^\Lambda(\one)}\right) ,
\eeq
where the products of holonomies are ordered according to the initial orientations, and $\epsilon^{f}_{v} = \pm 1$ depending on whether the tetrahedron containing $f$ and $v$ is clockwise or anticlockwise. Since the faces of the bubbles are dual to half lines of color $1$, the following relation holds : $\sum_{b \in \cB_{1}} |F_b| = 2 |\cF_{1}|$. This allows to absorb the normalizations of the remaining propagators into the bubbles:
\beq
\cA^{\cG} = (\lambda \overline{\lambda})^{\frac{\cN }{2}}  \int [\extd G]^{\frac{3 \cN}{2}} \left( \prod_{b \in \cB_{1}} [\delta^\Lambda(\one)]^{- \frac{|F_{b}|}{2}} \prod_{v \in V_{b}} \delta^\Lambda\left( \overrightarrow{\prod_{f \in \triangle^{b}_{v}}} (G_{v}^{f})^{\epsilon^{f}_{v}}\right)  \right) \left( \prod_{f \in \cF_{1}} \delta^\Lambda\left( \overrightarrow{\prod_{v \in f}} G_{v}^{f}\right)\right)\, .
\eeq
But for a bubble $b \in \cB_{1}$ we have the two additional relations:
\bes
2-2 g_b &\equiv& |V_b| -|E_b| + |F_b| \nn \\
3 |F_{b}| &=& 2|E_{b}| \,,
\ees
so that: $-\frac{|F_{b}|}{2} = 2-2 g_b - |V_{b}|$. This gives a formula for the amplitude showing the explicit dependence on the genera of the bubbles of color $1$. Generalizing to any color $\ell$, we obtain the final result of this section:
\bes\label{final_amplitude}
\cA^{\cG} = (\lambda \overline{\lambda})^{\frac{\cN }{2}}  \int [\extd G]^{\frac{3 \cN}{2}} \left( \prod_{b \in \cB_{\ell}} [\delta^\Lambda(\one)]^{2-2 g_b - |V_{b}|} \prod_{v \in V_{b}} \delta^\Lambda\left( \overrightarrow{\prod_{f \in \triangle^{b}_{v}}} (G_{v}^{f})^{\epsilon^{f}_{v}}\right)  \right) \left( \prod_{f \in \cF_{\ell}} \delta^\Lambda\left( \overrightarrow{\prod_{v \in f}} G_{v}^{f}\right)\right)\, .
\ees
The interpretation of the different terms is the following. The pseudo-manifold associated to $\cG$ is now built from $|\cB_{\ell}|$ cells, glued together through their triangulated boundaries (the bubbles). Each bubble $b$ contributes with $|V_b|$ $\delta$-functions imposing flatness of the holonomies around its vertices, plus an overall factor depending on its genus and its number of vertices. These bubbles are glued together through $\frac{\cN}{2}$ $\delta$-functions\footnote{Note that in this sense we can have tadpole lines, which identify two triangles in a same bubble.}, washing out the non geometric data (that is embedding information of the triangles). 

\

Before moving on, let us stress how convenient the above expression for the quantum amplitude is, for analyzing its dependence on the underlying simplicial complex. The first contribution to it (in the first bracket) encodes its degree of manifold-ness entirely, as it only depends on the bubble structure and in a way that is local at the level of each bubble; the second term, on the other hand, encodes the dependence on the structure and topology of the whole complex, but not its possible singular nature, as it depends on how the bubbles, dual to vertices of the simplicial complex, are glued to one another. In the following, we will be only concerned with the bubble structure, and on the issue of the relative suppression of pseudo-manifold configurations over the regular manifolds, and we will thus focus only on the first type of contributions, trying in a sense to \lq trivialize{\rq} the rest of the expression. However, we believe that the above expression could be a natural starting point also for studying the \lq complementary{\rq} issue of the relative weight of different simplicial topologies, for given structure of singularities. We leave this question for future work.

\subsection{First bound}

From the previous formula we can easily derive a general bound for the amplitude of $\cG$. First pick up one strand in each propagator of color $\ell$, and integrate the $\frac{\cN}{2}$ corresponding $\delta$-functions with respect to the variables associated to these strands. This changes the arguments of the $\delta$-functions of the vertices connected to these strands accordingly, possibly in a very complicated way. Anyway, the amplitude as now the form:
\beq
\cA^{\cG} = (\lambda \overline{\lambda})^{\frac{\cN }{2}}  \int [\extd G]^{\cN} \left( \prod_{b \in \cB_{\ell}} [\delta^\Lambda(\one)]^{2-2 g_b - |V_{b}|} \prod_{v \in V_{b}} \delta^\Lambda\left( \cdots \right)  \right)\,,
\eeq
where the dots denote complicated products of holonomy variables. Now using the rough bound $\delta^\Lambda(\cdots) \leq \delta^\Lambda(\one)$for the $|V_b|$ remaining $\delta$-functions  \footnote{This is due to the fact that the $j$-th character of $\SU(2)$ is bounded by its value at the identity $\one$.} per bubble $b$, together with the normalization of the Haar measure, we get:
\beq\label{bound}
\cA^{\cG} \leq (\lambda \overline{\lambda})^{\frac{\cN }{2}}  [\delta^\Lambda(\one)]^{\sum_{b \in \cB_{\ell}} (2 - 2 g_b)} \,.
\eeq

Thus we obtained a bound depending on the topology of the bubbles of color $\ell$. This might seem a bit unnatural, since formula (\ref{bound}) is not symmetric with respect to the color labels. Note however that we have the same kind of bound for any color, and we are completely free to choose any of them. What the refined versions of this inequality will show is exactly that discussing singularities of one given color is relevant, in the sense that it allows to derive optimal bounds.

%
\section{Bounding pseudo-manifolds} \label{sec:bubble2}
%


The efficiency of the bound (\ref{bound}) depends on a competition between the number of bubbles and their topologies. As a consequence, for a given singular topology, the bound can always be made arbitrarily big by adding a large number of planar bubbles (which do not change the singularities). Instead it would be useful to obtain a bound which only depends on the topology of singular bubbles. We thus have to device tools to get rid of as many planar bubbles as possible. The so-called \textit{dipole moves} used in combinatorial topology are of this kind, and were successfully applied to colored group field theory \cite{francesco, RazvanN}. Especially, $1$-dipole contraction moves were used in \cite{RazvanN} to study the large $N$ limit of the Boulatov model (see figure (\ref{dipole})). The author showed that any graph can be successively contracted so as to obtain what is called a \textit{core graph}, with the following property: for any color $\ell$, there is either a unique bubble of color $\ell$, or they are all non-planar. 

\begin{figure}[h]
\begin{center}
\includegraphics[scale=0.5]{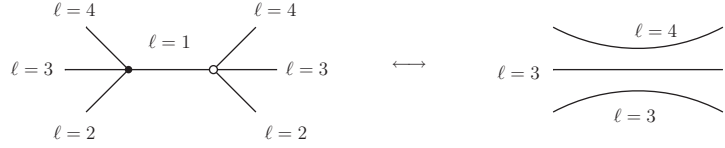} 
\caption{$1$-dipole move. This defines an homeomorphism whenever the two bubbles (of color $1$) connected by the line of color $1$ are different, and at least one of them is planar.} \label{dipole}
\end{center}
\end{figure}

In this section, we will first rederive the 1-dipole contraction move formula, which in our framework amounts to merge two different bubbles. This will then allow us to apply formula (\ref{bound}) to core graphs and obtain sharper bounds on pseudo-manifolds.

\subsection{Reduction to core graphs}

In this paragraph we show that we can derive the properties of 1-dipole contractions from formula (\ref{final_amplitude}). This will guarantee that the bound (\ref{bound}) can be applied to a core graph equivalent to $\cG$ instead of $\cG$ itself. This will imply bounds on pseudo-manifolds, indexed by the type and number of point singularities.

Let us consider two different bubbles $b_1$ and $b_2$ in $\cB_{\ell}$, glued through a triangle $f_{0} \in F_{b_1} \cap F_{b_2}$. We make in addition the assumption that at least one of them, say $b_1$, is a sphere. The contribution of the two bubbles to the amplitude is given by a factor of the form:
\bes
\left( [\delta^\Lambda(\one)]^{2 - |V_{b_1}|} \prod_{v \in V_{b_1}, v \notin f_0} \delta^\Lambda\left( \overrightarrow{\prod_{f \in \triangle^{b_1}_{v}}} (G_{v}^{f})^{\epsilon^{f}_{v}}\right) \right)  \left( [\delta^\Lambda(\one)]^{2-2 g_{b_2} - |V_{b_2}|} \prod_{v \in V_{b_2}, v \notin f_0} \delta^\Lambda\left( \overrightarrow{\prod_{f \in \triangle^{b_2}_{v}}} (G_{v}^{f})^{\epsilon^{f}_{v}}\right)\right) \nn \\
\times \int \extd G_{u_1}^{f_0} \extd G_{u_2}^{f_0} \extd G_{u_3}^{f_0}  \delta^\Lambda\left( G_{u_{1}}^{f_0} G_{u_2}^{f_0} G_{u_3}^{f_0}\right) \prod_{i = 1}^{3} \delta^\Lambda\left( \overrightarrow{\prod_{f \in \triangle^{b_1}_{u_i}}} (G_{u_i}^{f})^{\epsilon^{f}_{v}} \right) \delta^\Lambda\left( \overrightarrow{\prod_{f \in \triangle^{b_2}_{u_i}}} (G_{u_i}^{f})^{\epsilon^{f}_{v}} \right) \, , \nn
\ees
where $u_1$, $u_2$ and $u_3$ are the vertices of $f_0$. Before integrating with respect to $G_{u_{i}}^{f_0}$, we would like to get rid of $\delta^\Lambda\left( G_{u_{1}}^{f_0} G_{u_2}^{f_0} G_{u_3}^{f_0}\right)$, which imposes closure of the triangle $f_0$. Using the other closure and flatness constraints in $b_1$, we see that it is equivalent to saying that the holonomy along a path circling $f_0$ in $b_1$ has to be flat (see figure (\ref{dipole_contraction})). Iterating the process shows that this path can actually be deformed arbitrarily. But $b_1$ is a sphere, hence simply connected. We can therefore contract the path around another triangle of $b_1$, and write the constraint  $G_{u_{1}}^{f_0} G_{u_2}^{f_0} G_{u_3}^{f_0} = \one$ as the closure condition in this triangle. We see thus that $\delta^\Lambda\left( G_{u_{1}}^{f_0} G_{u_2}^{f_0} G_{u_3}^{f_0}\right)$ is redundant and can be set to $\delta^\Lambda\left( \one\right)$ without changing the integral. 

\begin{figure}[h]
\begin{center}
\includegraphics[scale=0.5]{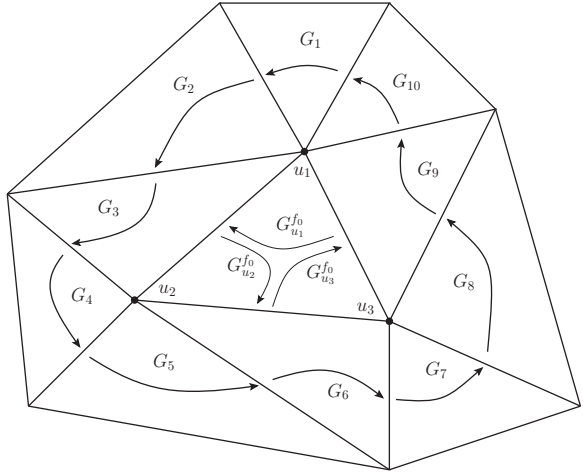} 
\caption{Triangle $f_0$ and its neighbors in $b_1$. Using flatness around $u_1$, $u_2$, $u_3$, and constraints in the three triangles sharing an edge with $f_0$, we show that $G_{u_{1}}^{f_0} G_{u_2}^{f_0} G_{u_3}^{f_0} = G_1 G_2 \cdots G_9 G_{10}$.} \label{dipole_contraction}
\end{center}
\end{figure}

We can now safely integrate the $G_{u_{i}}^{f_0}$ variables, which corresponds to removing $f_{0}$ and taking the connected sum of $b_1$ and $b_2$. We denote this connected sum $b_1 \# b_2$ and refer to \cite{Vince, francesco} for more details. This leads to:
\bes
\left( [\delta^\Lambda(\one)]^{2 - |V_{b_1}| + 2 - |V_{b_2}| - 2 g_{b_2} } \prod_{v \in V_{b_1 \# b_2}} \delta^\Lambda\left( \overrightarrow{\prod_{f \in \triangle^{b_1 \# b_2}_{v}}} (G_{v}^{f})^{\epsilon_{f}}\right) \right)  \times  \delta^\Lambda\left( \one \right) \,.\nn
\ees
But we also have that (the second equality crucially depends on $b_1$ being a sphere):
\bes
|V_{{b_1} \# {b_2}}| = |V_{b_1}| + |V_{b_2}| - 3 \nn \\
g_{{b_1} \# {b_2}} = g_{b_2} \, ,
\ees
so that in the end the contribution of the two bubbles is that of their connected sum. We recover the relation found in \cite{RazvanN}, between the amplitude of the initial graph $\cG$ and the one after absorption of the planar bubble $b_1$ in $b_2$, noted $\cG^{b_{1} \rightarrow b_{2}}$:
\beq
\cA^{\cG} = (\lambda \overline{\lambda}) \cA^{\cG^{b_{1} \rightarrow b_{2}}}
\eeq

This shows that amplitudes are invariant under $1$-dipole contractions, up to $(\lambda \overline{\lambda})$ factors. But it was also shown in \cite{RazvanN} that, for each color, a complete set of $1$-dipole contractions can be performed. We refer to this work for a detailed proof, which relies on a bubble routing. Thus any graph amplitude can be computed from an equivalent graph without $1$-dipoles: a core graph. 

\subsection{A first hierarchy of bounds}

We now use inequality (\ref{bound}) on core graphs\footnote{Strictly speaking, we do not need to contract all the 1-dipoles in the graph, but only that of a given color.} to bound all graphs which are in the same equivalent class under 1-dipole contractions. If $\cG_{p}$ is a core graph with $2p$ vertices, there are two possibilities for its bubbles of color $\ell$: either there is just one of them, or they are all non-planar. If all the bubbles are planar, the graph is dual to an orientable manifold \cite{DP-P,lrd, francesco}. In this case the previous inequality gives, for any color:
\beq
\cA^{\cG_{p}} \leq (\lambda \overline{\lambda})^{p}  [\delta^\Lambda(\one)]^{2} \,. 
\eeq
For a non-manifold core graph, that is when at least one bubble is non-planar (and consequently all the bubbles of the same color), the amplitude is shown to converge, and even to decay to zero as soon as there exists a bubble of genus $2$. More precisely, if $\ell$ is the color of a non-planar bubble, we get:
\bes
\cA^{\cG_{p}} &\leq& (\lambda \overline{\lambda})^{p}  [\delta^\Lambda(\one)]^{{\sum_{b \in \cB_{\ell}}} (2 - 2 g_b)} \nn \\
&\leq& (\lambda \overline{\lambda})^{p}  [\delta^\Lambda(\one)]^{2 - 2 g_{max}} \nn \\
&\leq& (\lambda \overline{\lambda})^{p} \nn ,
\ees
where in the second line $g_{max}$ is defined as the maximal genus in $\cB_{\ell}$. Now remark that the properties of a core graph ensure that ${\sum_{b \in \cB_{\ell}}} (2 - 2 g_b) = {\sum_{b \in \cB^{s}_{\ell}}} (2 - 2 g_b)$, where $\cB^{s}_{\ell}$ is the set of singular bubbles of color $\ell$. But since a core graph has the same singularities as all the graphs which are in the same class, the previous bounds generalize to any graph $\cG$ in the following sense:
\beq\label{bound_core}
\cA^{\cG} \leq (\lambda \overline{\lambda})^{\frac{\cN}{2}}  [\delta^\Lambda(\one)]^{{\sum_{b \in \cB^{s}_{\ell}}} (2 - 2 g_b)}
\leq (\lambda \overline{\lambda})^{\frac{\cN}{2}}  [\delta^\Lambda(\one)]^{2 - 2 g_{max}} \, ,
\eeq
where $\cN$ is the number of nodes of $\cG$, and $g_{max}$ is the maximal genus of its bubbles of color $\ell$. We have thus obtained a hierarchy of bounds, indexed by the types and number of point singularities.

\section{Bounding pseudo-manifolds without reducing to core graphs}\label{sec:optimal}

In this section we study whether the bound (\ref{bound_core}) is optimal or not. We will show that it is, as long as we are concerned with the most degenerate singularities of graphs only. However, if we want to take several singularities into account, we will show that they can be improved. As an interesting by-product, we will see that the reduction to core graphs becomes unnecessary for the purpose of deriving optimal bounds. 

\subsection{Are the bounds optimal?} 

In order to address this question, we need to be able to compute exact amplitudes of a sufficiently rich set of graphs. In this respect, we propose to first design elementary pieces of graphs which have one unique bubble of color $\ell$, and a certain number of external legs. We will then be able to build connected vacuum graphs with any kinds of bubbles out of these elementary graphs. Of course we want to keep the combinatorics of these elementary graphs rather simple, to be able to do exact calculations. 

It is then natural to start from minimal $3$-graphs representing $2$-dimensional orientable surfaces of a given topology. They are called \textit{canonical graphs} in the mathematical literature (see \cite{Vince} and references therein). A canonical graph of genus $g$ has $2(2g+1)$ nodes. Figure (\ref{bubbles}) shows the canonical graphs of genus $0$, $1$ and their generalization to any genus $g$. We refer to \cite{Vince} for proofs of these statements and further comments.

\begin{figure}[h]
\begin{center}
\includegraphics[scale=0.5]{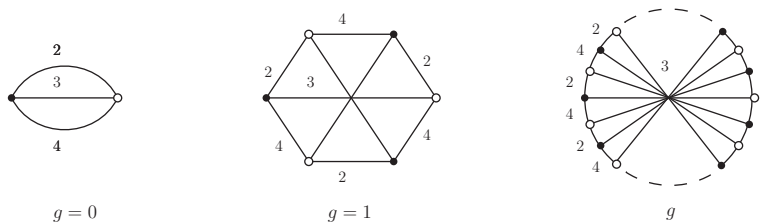} 
\caption{Canonical graphs for orientable surfaces: the sphere ($g=0$), the torus ($g=1$), and the general case of a genus $g$ surface.} \label{bubbles}
\end{center}
\end{figure}

We are ready to build our elementary graphs. For a given genus $g \in \mathbb{N}$ we start from the canonical graph shown in figure (\ref{bubbles}), and add external legs of color $1$ on every node. This gives a set of canonical bubbles with external legs, from which we can in principle construct any topology that is generated by the colored Boulatov model. 

The first question we ask is whether the bound (\ref{bound_core}) in terms of the maximal genus $g_{max}$ is optimal or not. The simplest graph we could think of to saturate this bound consists in one unique canonical graph of genus $g_{max}$ with a pairing of external legs maximizing the amplitude. As a first step, and also because this gives an interesting result that we will use in next section, we first contract $2g$ pairs of external legs, keeping two of them free. More precisely, for each genus $g$ we define the graph $\cC_{g}$ as shown in figure (\ref{bubbles_legs}). 

\begin{figure}[h]
\begin{center}
\includegraphics[scale=0.5]{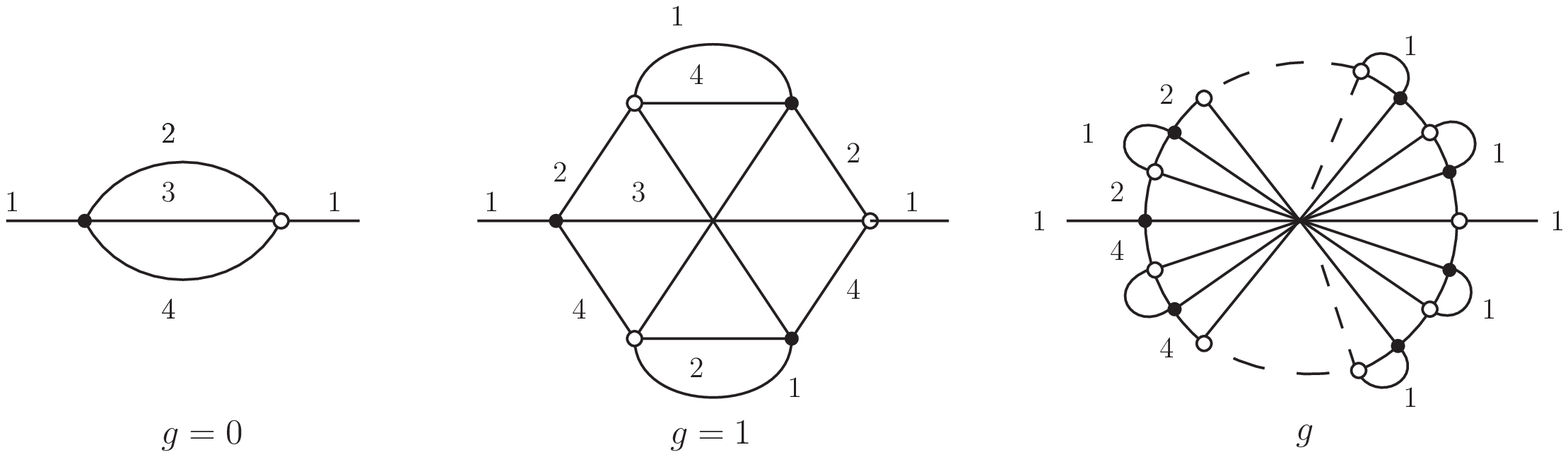} 
\caption{From left to right: $\cC_0$, $\cC_1$ and $\cC_g$.} \label{bubbles_legs}
\end{center}
\end{figure}

The reason why these graphs are useful lies in the following lemma (see figure (\ref{lemma_bubbles})):
\begin{lemma}\label{lemma_cg}
Let $\cG$ be a colored graph, with a subgraph $\cC_{g}$ for some $g \in \mathbb{N}$. Call $\cG^{\cC_{g} \rightarrow \cC_{0}}$ the graph obtained after replacement of $\cC_{g}$ by the graph $\cC_{0}$. Then:
\beq
\cA^{\cG} = (\lambda \overline{\lambda})^{2 g}[\delta^\Lambda(\one)]^{- 2 g} \cA^{\cG^{\cC_{g} \rightarrow \cC_{0}}}
\eeq

\end{lemma}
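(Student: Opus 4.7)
My plan is to apply the amplitude formula \eqref{final_amplitude} with color $\ell=1$ and isolate the local contribution of $\cC_g$. By construction the bubble of color $1$ sitting inside $\cC_g$ is the canonical 3-colored 3-graph representing a closed genus-$g$ surface; since it carries no external strands of colors $2,3,4$, it factorizes out of the bubble product in \eqref{final_amplitude}.

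A direct combinatorial computation shows that any such canonical 3-graph has $|F_b|=4g+2$ triangles (its nodes), $|E_b|=3|F_b|/2=6g+3$ edges, and hence $|V_b|=2-2g+|E_b|-|F_b|=3$ vertices, independently of $g$. The bubble prefactor $[\delta^\Lambda(\one)]^{2-2g_b-|V_b|}$ is therefore $[\delta^\Lambda(\one)]^{-1-2g}$ for $\cC_g$, versus $[\delta^\Lambda(\one)]^{-1}$ for $\cC_0$: the ratio is exactly $[\delta^\Lambda(\one)]^{-2g}$. Similarly, since $\cC_g$ has $4g+2$ nodes against $2$ for $\cC_0$, the global coupling prefactor $(\lambda\overline{\lambda})^{\cN/2}$ provides an extra $(\lambda\overline{\lambda})^{2g}$ in $\cA^{\cG}$ relative to $\cA^{\cG^{\cC_g\to\cC_0}}$. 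These two ratios already reproduce the factor announced in the lemma.

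It then remains to check that the rest of the integrand — the three vertex-flatness $\delta^\Lambda$'s at the canonical bubble's vertices, the face-flatness $\delta^\Lambda$'s of the $2g$ internal color-$1$ lines of $\cC_g$, and the integrations over their holonomies — evaluates to the same function of the two external color-$1$ legs of $\cC_g$ as it does for $\cC_0$. I would prove this by induction on $g$, exhibiting $\cC_g$ as $\cC_{g-1}$ with an extra ``handle'' built from four nodes and two internal color-$1$ lines, and integrating the six new handle holonomies against the face-flatness $\delta^\Lambda$'s of the two handle lines. The inductive step amounts to showing that, after integration, the handle leaves the effective constraint on the external variables unchanged and contributes only trivial $\delta^\Lambda(\one)$ factors.

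The main obstacle is precisely this inductive step. The 1-dipole argument of Section~\ref{sec:bubble2} does not apply directly, because the face-flatness of an internal handle line constrains a cycle which is essential in the absolute homology of the canonical bubble. However, the three vertex-flatness $\delta^\Lambda$'s at $|V_b|=3$ are exactly what is needed to trivialize this cycle \emph{relative to the three bubble vertices}: using them together with the orderings $\epsilon^f_v$ encoded in \eqref{final_amplitude}, the handle's face-flatness holonomy can be rewritten as a conjugate of the identity. Making this manipulation explicit on a fixed canonical form of the handle, in the same spirit as the $\delta^\Lambda$-chain of manipulations of the 1-dipole proof, will yield the required trivial factor per handle line. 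Iterating $g$ times from $\cC_g$ down to $\cC_0$ and collecting the prefactors completes the proof.
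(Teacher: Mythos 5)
Your prefactor bookkeeping is correct and coincides with the paper's: the canonical genus-$g$ bubble has $|F_b|=4g+2$, $|E_b|=6g+3$, hence $|V_b|=3$ independently of $g$, so the bubble weight $[\delta^\Lambda(\one)]^{2-2g_b-|V_b|}$ together with the $4g+2$ nodes accounts for exactly $(\lambda\overline{\lambda})^{2g}[\delta^\Lambda(\one)]^{-2g}$. The gap lies in the mechanism you propose for the remaining step. You assert that the closure constraint of each internal color-$1$ line of a handle can be ``rewritten as a conjugate of the identity'' using the three vertex-flatness constraints, so that each handle ``contributes only trivial $\delta^\Lambda(\one)$ factors''. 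This is the opposite of what happens, and if it were true it would destroy the result: $2g$ redundant closure constraints would each evaluate to $\delta^\Lambda(\one)$, producing an extra $[\delta^\Lambda(\one)]^{2g}$ that exactly cancels the $[\delta^\Lambda(\one)]^{-2g}$ you have already extracted from the bubble prefactor, leaving no genus suppression at all. (This is precisely what does happen in the $1$-dipole move, where the redundant closure constraint leaves an explicit $\delta^\Lambda(\one)$ behind; there it is absorbed by $|V_{b_1\# b_2}|=|V_{b_1}|+|V_{b_2}|-3$, but here $|V_b|=3$ is unchanged and nothing absorbs it.) Nor can the vertex-flatness constraints trivialize the $2g$ essential cycles ``relative to the three vertices'': $H_1$ of the genus-$g$ surface survives in the relative homology, and the non-redundancy of exactly $2g$ of the constraints is the very origin of the suppression you are trying to prove.

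The correct mechanism, which is what the paper's proof uses, is that the specific adjacent pairing of external legs in $\cC_g$ makes the six corner variables of each handle cancel pairwise (through the orientation signs $\epsilon^f_v$) inside the three ordered vertex-flatness products, which therefore collapse to $\delta^\Lambda(G_{\ell'}\tilde{G}_{\ell'}^{-1})$ for $\ell'=2,3,4$, involving only the two external legs. The $2g$ closure constraints $\delta^\Lambda(H^{(i)}_2H^{(i)}_3H^{(i)}_4)$ then decouple with completely free arguments and integrate to $1$ (not to $\delta^\Lambda(\one)$) against the normalized Haar measure. Your induction on handles is a reasonable organizational device for establishing this, but the inductive step must prove the pairwise cancellation of the handle variables inside the vertex deltas, not the redundancy of the handle closure constraints.
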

\begin{proof}
We first remark that a canonical triangulation of a bubble of genus $g$ has only three vertices. Indeed, dual of vertices (of the triangulation) of color $2$ are closed chains of strands alternatively of color $3$ and $4$. It is easy to see that there is only one such closed chain of strands in the canonical graph of genus $g$. So the dual triangulation has only one vertex of color $2$, and similarly for colors $3$ and $4$. This means that in the amplitude of $\cG$, the $\cC_{g}$ subgraph contributes with only three $\delta$-functions associated to its dual vertices, and $2g$ $\delta$-functions associated to pairings of lines of color $1$. Moreover, these pairings are such that the arguments in the $\delta$-functions associated to the vertices simplify, and the contribution of $\cG_g$ to the amplitude of $\cG$ reads:
\beq
(\lambda \overline{\lambda})^{2 g + 1}  \int [\extd H]^{6 g} \left( [\delta^\Lambda(\one)]^{2-2 g - 3} \delta^{\Lambda}(G_{2} {\tilde{G}_{2}}^\inv) \delta^{\Lambda}(G_{3} {\tilde{G}_{3}}^\inv) \delta^{\Lambda}(G_{4} {\tilde{G}_{4}}^\inv) \right) \left( \prod_{i=1}^{2 g} \delta^{\Lambda}(H^{(i)}_2 H^{(i)}_3  H^{(i)}_4 )\right) \,,
\eeq
where the variables $G_{\ell'}$ and $\tilde{G}_{\ell'}$ are that of the two external legs of $\cG_g$. The $H^{(i)}_{\ell'}$ are associated to the $2 g$ remaining lines of color $1$, and can be integrated. We obtain a term:
\beq
(\lambda \overline{\lambda})^{2 g + 1} [\delta^\Lambda(\one)]^{-2 g - 1} \delta^{\Lambda}(G_{2} {\tilde{G}_{2}}^\inv) \delta^{\Lambda}(G_{3} {\tilde{G}_{3}}^\inv) \delta^{\Lambda}(G_{4} {\tilde{G}_{4}}^\inv)\,,
\eeq
which reduces to
\beq
(\lambda \overline{\lambda}) [\delta^\Lambda(\one)]^{- 1} \delta^{\Lambda}(G_{2} {\tilde{G}_{2}}^\inv) \delta^{\Lambda}(G_{3} {\tilde{G}_{3}}^\inv) \delta^{\Lambda}(G_{4} {\tilde{G}_{4}}^\inv)
\eeq
when $g = 0$. These two terms differ by a factor $(\lambda \overline{\lambda})^{2 g} [\delta^\Lambda(\one)]^{-2 g}$, which concludes the proof.
\end{proof}

\begin{figure}[h]
\begin{center}
\includegraphics[scale=0.5]{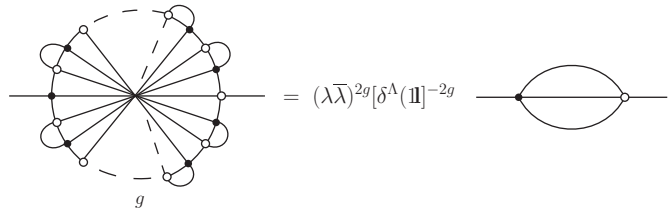} 
\caption{Graphical representation of lemma $1$.} \label{lemma_bubbles}
\end{center}
\end{figure}

This is all we need in order to prove the following theorem:

\begin{theorem}\label{theo1} If a vacuum graph $\cG$ of order $\cN$ contains a bubble of genus $g$, then: 
\beq
\cA^{\cG} \leq (\lambda \overline{\lambda})^{\frac{\cN}{2}}  [\delta^\Lambda(\one)]^{2 - 2 g}\,.
\eeq
Reciprocally, for any $g \in \mathbb{N}$, there exists a graph $\cG$ with at least one bubble of genus $g$, such that:
\beq
\cA^{\cG} = (\lambda \overline{\lambda})^{\frac{\cN}{2}}  [\delta^\Lambda(\one)]^{2 - 2 g}\,,
\eeq
where $\cN$ is the order of $\cG$.
\end{theorem}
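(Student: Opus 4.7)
The argument naturally splits into the upper bound and the saturation. For the upper bound, I would simply invoke the bound~(\ref{bound_core}) derived in the previous section, specialized to the color $\ell$ of the bubble of genus $g$ in $\cG$. Since this bubble lives in $\cB_\ell$, the maximal genus $g_{max}$ of bubbles of color $\ell$ satisfies $g_{max}\ge g$; because $\delta^\Lambda(\one)\ge 1$ (its value is $\sum_{j\le\Lambda}(2j+1)^2$), this yields $[\delta^\Lambda(\one)]^{2-2g_{max}}\le[\delta^\Lambda(\one)]^{2-2g}$, from which the first assertion follows immediately.

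For the reciprocal, the natural candidates are built from the canonical bubbles of figure~(\ref{bubbles_legs}). My proposal is to define $\cG_g$ as the vacuum graph obtained by closing the two external legs of $\cC_g$ with a single propagator of color $1$. This closure is legitimate because the two external legs of $\cC_g$ are attached to nodes of opposite orientations, consistently with the coloring rules of the model. By construction, $\cG_g$ has $\cN_g=2(2g+1)$ nodes, and its color-$1$ bubble is exactly the canonical surface of genus $g$, so $\cG_g$ does contain a bubble of genus $g$. Applying lemma~\ref{lemma_cg} with the subgraph $\cC_g\subset\cG_g$ then produces
\begin{equation}
\cA^{\cG_g} \;=\; (\lambda\overline{\lambda})^{2g}\,[\delta^\Lambda(\one)]^{-2g}\,\cA^{\cG_0},
\end{equation}
reducing the entire problem to the evaluation of $\cA^{\cG_0}$, where $\cG_0$ is the elementary ``melonic'' vacuum graph with two nodes and four propagators (one per color), dual to the minimal triangulation of $S^3$.

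The only actual computation in the proof is therefore to verify $\cA^{\cG_0}=(\lambda\overline{\lambda})[\delta^\Lambda(\one)]^{2}$. For this I would apply formula~(\ref{final_amplitude}) directly with $\ell=1$: the unique bubble of color~$1$ has $g_b=0$ and $|V_b|=3$, contributing a prefactor $[\delta^\Lambda(\one)]^{2-0-3}=[\delta^\Lambda(\one)]^{-1}$; the three vertex delta functions each collapse to $\delta^\Lambda(\one)$, because the two half-lines incident to a given vertex of the bubble carry opposite signs $\epsilon^f_v$ (one at the clockwise node, one at the anticlockwise), so their arguments telescope to the identity; and the single face delta function $\delta^\Lambda(G_2 G_3 G_4)$ integrates to $1$ against the three remaining strand variables. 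Assembling, $\cA^{\cG_0}=(\lambda\overline{\lambda})\cdot[\delta^\Lambda(\one)]^{-1}\cdot[\delta^\Lambda(\one)]^{3}\cdot 1=(\lambda\overline{\lambda})[\delta^\Lambda(\one)]^{2}$. Inserting this into the relation above yields $\cA^{\cG_g}=(\lambda\overline{\lambda})^{2g+1}[\delta^\Lambda(\one)]^{2-2g}=(\lambda\overline{\lambda})^{\cN_g/2}[\delta^\Lambda(\one)]^{2-2g}$, saturating the upper bound. The only delicate step is the orientation bookkeeping producing the sign pattern in the vertex-delta collapse; everything else is combinatorial accounting from results already at hand.
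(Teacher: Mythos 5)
Your proposal is correct and follows essentially the same route as the paper: the upper bound is read off from inequality~(\ref{bound_core}) (your observation that $g_{max}\ge g$ and $\delta^\Lambda(\one)\ge 1$ makes explicit what the paper leaves implicit), and saturation is obtained by closing the two legs of $\cC_g$, invoking Lemma~\ref{lemma_cg} to reduce to the sunshine graph, whose amplitude $(\lambda\overline{\lambda})[\delta^\Lambda(\one)]^2$ the paper merely asserts ``by direct computation'' and you verify correctly from formula~(\ref{final_amplitude}).
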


\begin{proof}
The first part is a consequence of inequality (\ref{bound_core}). We just have to exhibit a graph that saturates the bound. For $g \in \mathbb{N}$, consider the graph $\cC_{g}$ and join its two external legs. Lemma \ref{lemma_cg} ensures that the amplitude of such a graph is $(\lambda \overline{\lambda})^{2 g}[\delta^\Lambda(\one)]^{- 2 g}$ times the amplitude of the analog graph obtained from $\cC_{0}$. The latter is the sunshine graph, dual to a sphere, and whose amplitude is $(\lambda \overline{\lambda})[\delta^\Lambda(\one)]^{2}$, as can be verified by direct computation. The amplitude we are looking for is therefore $(\lambda \overline{\lambda})^{2 g + 1}  [\delta^\Lambda(\one)]^{2 - 2 g}$, which concludes the proof, by suitable matching of the genus $g$ and the order $\cN$ of the graph. 
\end{proof}

One would now like to generalize this result, for example by constructing a connected graph with singularities of genera $g_{1}, ..., g_{n}$ which scales as $[\delta^\Lambda(\one)]^{{\sum_{i = 1}^{n}} (2 - 2 g_i)}$. Actually, one can show that this is not possible. We will show, instead, that the bound (\ref{bound_core}) can be made sharper, and explain how theorem (\ref{theo1}) generalizes. Moreover, the proof of this refined bound will not rely anymore on dipole contractions and core graphs.

\subsection{Optimal bounds} \label{sec:bubble3}

We start by a computation of the amplitudes of chains of canonical graphs $\cC_{g}$. We call $\cC_{g_1,...,g_{n}}$ the chain of $n$ graphs $(\cC_{g_{1}}, \cdots, \cC_{g_n})$ as represented in figure (\ref{chains}). Chains of $\cC_{0}$ graphs being maximally divergent spheres \cite{scaling3d}, this suggests that the chain $\cC_{g_1,...,g_{n}}$ could be a dominant graph in the class of graphs with singularities $(g_{1}, ..., g_{n})$. So let us first compute these amplitudes.

\begin{figure}[h]
\begin{center}
\includegraphics[scale=0.5]{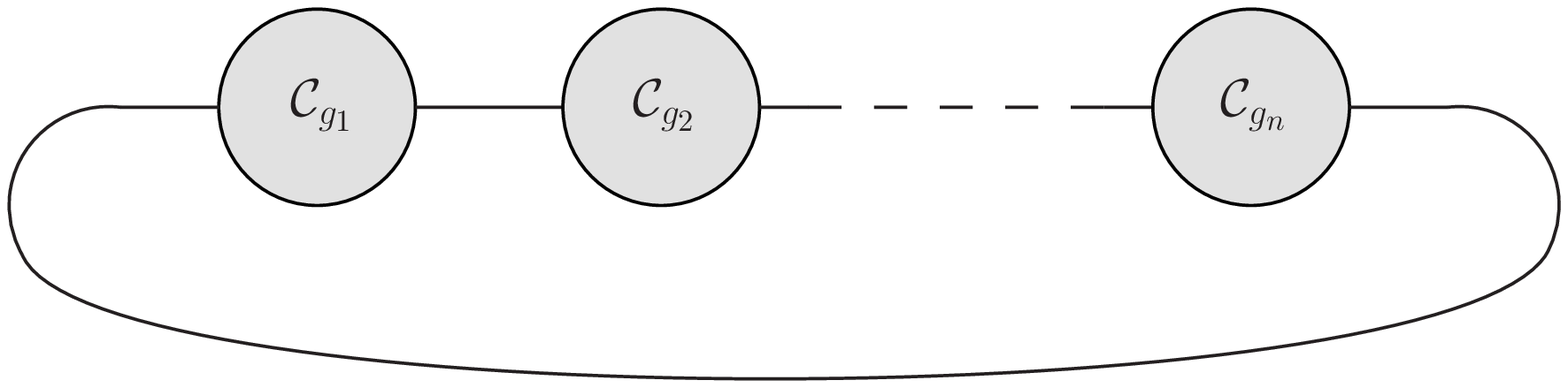} 
\caption{Chain $\cC_{g_1,...,g_{n}}$.} \label{chains}
\end{center}
\end{figure}

\begin{lemma}\label{lemma_chains} Let $n \in \mathbb{N^*}$, and $g_{1}, ..., g_{n} \in \mathbb{N}$. Then:
\beq
\cA^{\cC_{g_1,...,g_{n}}} = (\lambda \overline{\lambda})^{{\sum_{i = 1}^{n}} (2 g_{i}+1)}[\delta^\Lambda(\one)]^{2 - 2 {\sum_{i = 1}^{n}} g_i}\,.
\eeq 
\end{lemma}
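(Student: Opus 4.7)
The plan is to reduce to the already known case of chains of $\cC_0$ graphs by iteratively applying Lemma \ref{lemma_cg}.

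First, I would observe that each $\cC_{g_i}$ sits as a well-identified subgraph of $\cC_{g_1,\ldots,g_n}$, with its two external legs of color $1$ attaching to its neighbours in the chain. Since Lemma \ref{lemma_cg} is stated as a local replacement rule $\cC_g \mapsto \cC_0$ valid inside any colored host graph, I can apply it $n$ times, once for each $i \in \{1,\ldots,n\}$, turning the chain into $\cC_{0,\ldots,0}$. Each replacement multiplies the amplitude by $(\lambda\overline{\lambda})^{2g_i}[\delta^\Lambda(\one)]^{-2g_i}$, and these factors compose independently (the successive replacements commute because they act on disjoint subgraphs), yielding
\beq
\cA^{\cC_{g_1,\ldots,g_n}} \;=\; (\lambda\overline{\lambda})^{2\sum_i g_i}\, [\delta^\Lambda(\one)]^{-2\sum_i g_i}\, \cA^{\cC_{0,\ldots,0}}\,.
\eeq

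It then remains to evaluate $\cA^{\cC_{0,\ldots,0}}$. This graph has $\cN = 2n$ nodes, all of its bubbles are spheres, and it is dual to a triangulated $3$-sphere; it is precisely the family of maximally divergent spherical vacuum graphs studied in \cite{scaling3d}, whose amplitude equals $(\lambda\overline{\lambda})^n [\delta^\Lambda(\one)]^2$. If preferred, I would check this independently by a short induction on $n$: the base $n=1$ is the sunshine graph, whose amplitude $(\lambda\overline{\lambda})[\delta^\Lambda(\one)]^2$ is the one already invoked in the proof of Theorem \ref{theo1}, and attaching one further $\cC_0$ block adds two vertices whose reduced Feynman rules (figure \ref{feynman_rules_reduced}) collapse upon strand integration, contributing a single extra $\lambda\overline{\lambda}$ and leaving the $[\delta^\Lambda(\one)]^2$ prefactor untouched. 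Substituting this back gives $(\lambda\overline{\lambda})^{2\sum_i g_i + n}[\delta^\Lambda(\one)]^{2-2\sum_i g_i}$, which is the claimed identity once one writes $n = \sum_i 1$ so that the exponent of $\lambda\overline{\lambda}$ regroups as $\sum_i(2g_i+1)$.

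The only point requiring genuine care is the base-case evaluation of $\cA^{\cC_{0,\ldots,0}}$: one must verify that the strand integrations in formula (\ref{final_amplitude}) along the linear chain really do leave exactly two global $\delta^\Lambda(\one)$ factors and not some other power. Thanks to the very degenerate triangulation of each $\cC_0$ (only three dual vertices, with all flatness constraints trivially chained to those of the neighbours by the propagators of color $1$), this is a bookkeeping exercise of the same type as the one underlying Lemma \ref{lemma_cg}, rather than a conceptual obstacle; I do not anticipate any difficulty beyond it.
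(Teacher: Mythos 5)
Your proposal is correct and follows essentially the same route as the paper: iterate Lemma \ref{lemma_cg} to reduce the chain to $\cC_{0,\ldots,0}$ at the cost of $(\lambda\overline{\lambda})^{2\sum_i g_i}[\delta^\Lambda(\one)]^{-2\sum_i g_i}$, then evaluate $\cC_{0,\ldots,0}$ by collapsing all but one $\cC_0$ block (the paper phrases this as ``a $\cC_0$ subgraph with unpaired external strands behaves like a propagator'') down to the sunshine graph, whose amplitude is $(\lambda\overline{\lambda})[\delta^\Lambda(\one)]^2$. The bookkeeping of the $\lambda\overline{\lambda}$ powers via $\cN=2\sum_i(2g_i+1)$ also matches.
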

\begin{proof}
Lemma \ref{lemma_cg} ensures that $\cC_{g_1,...,g_{n}}$ behaves like $\cC_{0,...,0}$ times $[\delta^\Lambda(\one)]^{-2 {\sum_{i = 1}^{n}} g_i}$. As proved in \cite{scaling3d}, $\cC_{0,...,0}$ is dual to a sphere and maximally divergent. A way to see it here is to remark that with the scaling we chose for the coupling $\lambda$, a $\cC_{0}$ subgraph whose two external strands are not paired behaves like a propagator (straightforward calculation), so that we can replace all subgraphs $C_{0}$ but one by propagators. We are left with the simplest graph corresponding to a sphere, that is again the sunshine graph which behaves like $[\delta^\Lambda(\one)]^{2}$. All in all, we get the right expression for the amplitude of $\cC_{g_1,...,g_{n}}$.
\end{proof}

If we stick to the previous analogy, the maximal amplitudes of graphs with bubbles of genera $(g_{1}, ..., g_{n})$ should be in $[\delta^\Lambda(\one)]^{2 - 2 {\sum_{i = 1}^{n}} {g_i}}$. In the remainder of this section we will show it is indeed the case.

We need to improve the bound (\ref{bound}). Recall that it has been obtained from formula (\ref{final_amplitude}) by first integrating all propagators. Making more precise this step of the procedure, it is possible to integrate more $\delta$-functions in the bubbles before using the bound $\delta^\Lambda(\cdots) \leq \delta^\Lambda(\one)$. This allows to prove the following proposition: 

\begin{proposition}\label{propo_ineg}
Let $\cG$ be a connected vacuum graph of order $\cN$, and $(\ell_1 \; \ell_2 \; \ell_3 \; \ell_4)$ a permutation of the colors. Then:
\beq\label{formula_propo}
\cA^{\cG} \leq (\lambda \overline{\lambda})^{\frac{\cN}{2}}  [\delta^\Lambda(\one)]^{|\cB_{\ell_{2}}| + |\cB_{\ell_{3}}| + {\sum_{b \in \cB_{\ell_1}}} (2 - 2g_b - |V_{b}(l_{2})| - |V_{b}(l_{3})|)}\,,
\eeq
where for any bubble $b \in \cB_{\ell_1}$, and $\ell_i \neq \ell_1$, $V_{b}(\ell_{i})$ denotes the set of vertices of color $\ell_i$ in $b$.
\end{proposition}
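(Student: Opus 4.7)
The strategy is to refine the derivation of the bound (\ref{bound}) by performing additional $\delta$-function integrations in the bubble constraints, rather than bounding all of them trivially.

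Starting from (\ref{final_amplitude}) with the distinguished color $\ell_1$, I first use the $\cN/2$ face deltas to integrate out one strand variable per face, namely the one associated to the vertex of color $\ell_4$ in that face. This eliminates all strand variables of color $\ell_4$ and leaves the $\ell_2$- and $\ell_3$-bubble deltas unchanged (since their arguments only contain variables of their own color), while the $\ell_4$-bubble deltas acquire more complicated, mixed arguments. Bounding the latter by $\delta^\Lambda(\one)$ as in the derivation of (\ref{bound}) produces a factor $[\delta^\Lambda(\one)]^{\sum_{b}|V_b(\ell_4)|}$, which combined with the prefactor $[\delta^\Lambda(\one)]^{2-2g_b-|V_b|}$ of (\ref{final_amplitude}) and the identity $|V_b|=|V_b(\ell_2)|+|V_b(\ell_3)|+|V_b(\ell_4)|$ yields exactly $[\delta^\Lambda(\one)]^{\sum_b(2-2g_b-|V_b(\ell_2)|-|V_b(\ell_3)|)}$.

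The rest of the integrand decouples into an $\ell_2$-block and an $\ell_3$-block, each of the form $\int\prod\extd G\prod_{b,\,v\in V_b(\ell_i)}\delta^\Lambda\bigl(\prod_{f\in\triangle_v^b}(G_v^f)^{\epsilon^f_v}\bigr)$. The crucial combinatorial observation is that, for $\ell_i\in\{\ell_2,\ell_3\}$, if one regards each such bubble delta as a node and each $\ell_i$-colored strand variable as an edge joining the two nodes in whose deltas it appears, then this graph has exactly $|\cB_{\ell_i}|$ connected components, one for each bubble of color $\ell_i$. Indeed, a vertex $v\in V_b(\ell_i)$ corresponds to an $\{\ell_1,\ell_i\}$-edge of the dual simplicial complex, canonically associated to the bubble of color $\ell_i$ around its $\ell_i$-endpoint; and an $\ell_i$-strand variable associated to a line $f$ of color $\ell_1$ in the $4$-graph connects the two bubble-vertices having the unique $\ell_i$-colored simplicial vertex of $f$ as common $\ell_i$-endpoint, so both lie in the same $\ell_i$-bubble. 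Connectedness within each $b'\in\cB_{\ell_i}$ follows from contracting the $\{\ell_j,\ell_k\}$-faces of the bubble graph of $b'$ (with $\{\ell_j,\ell_k\}=\{\ell_2,\ell_3,\ell_4\}\setminus\{\ell_i\}$) to single nodes, which produces a connected multigraph whenever $b'$ is.

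Within each of the $|\cB_{\ell_i}|$ connected components, the integral reduces to a standard momentum-conservation integral on a connected graph. Using a spanning tree of the component, I solve all but one of the flatness constraints for tree variables in terms of the remaining ones, bound the last remaining constraint by $\delta^\Lambda(\one)$, and observe that the unconstrained variables integrate to unity by normalization of the Haar measure. Each component therefore contributes at most $\delta^\Lambda(\one)$, so the $\ell_i$-block is bounded above by $[\delta^\Lambda(\one)]^{|\cB_{\ell_i}|}$. Multiplying the two blocks together with the prefactor computed above then yields precisely (\ref{formula_propo}). The single step requiring real care is the combinatorial identification of the connected components of the $\ell_i$-strand graph with $\cB_{\ell_i}$; all other ingredients are refinements of the procedure underlying (\ref{bound}).
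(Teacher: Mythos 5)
Your proposal is correct and follows essentially the same route as the paper: integrate the $\cN/2$ closure constraints of the $\ell_1$-colored lines against the $\ell_4$-colored strand variables, observe that the $\ell_2$- and $\ell_3$-vertex constraints then decouple into $|\cB_{\ell_2}|+|\cB_{\ell_3}|$ connected blocks each reducible to a single $\delta^\Lambda$ via a spanning tree, and bound the surviving $\delta$-functions by $\delta^\Lambda(\one)$ together with the identity $|V_b|=|V_b(\ell_2)|+|V_b(\ell_3)|+|V_b(\ell_4)|$. The only (harmless) differences are the order in which the $\ell_4$-vertex constraints are bounded and your more explicit justification that the connected components of the $\ell_i$-strand structure are in bijection with $\cB_{\ell_i}$, a point the paper asserts more briefly.
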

\begin{proof}
We start from equation (\ref{final_amplitude}), with $\ell = \ell_1$. We then have to integrate all the $\delta$-functions associated to the triangles of color $\ell_1$. Instead of using arbitrary variables, we integrate all the variables of a given color $\ell_4$. Since there is initially one variable of color $\ell_4$ per propagator of color $\ell_1$, it is possible to integrate all of them. We get an expression of this form:
\beq
\cA^{\cG} = (\lambda \overline{\lambda})^{\frac{\cN }{2}}  \int [\extd G]^{\cN} \left( \prod_{b \in \cB_{\ell_1}} [\delta^\Lambda(\one)]^{2-2 g_b - |V_{b}|} \left(  \prod_{v \in V_{b}(\ell_4)} \delta^\Lambda\left( \cdots \right) \right)  \left( \prod_{v \in V_{b}(\ell_2) \cup V_{b}(\ell_3)} \delta^\Lambda\left( \overrightarrow{\prod_{f \in \triangle^{b}_{v}}} (G_{v}^{f})^{\epsilon^{f}_{v}}\right) \right)  \right)\, .
\eeq
Here dots still denote complicated products of the holonomies and their inverses. Let us now focus on the parts of the integrand involving only variables of colors $\ell_2$ or $\ell_3$. For instance for $\ell_2$ it is:
\beq
\prod_{b \in \cB_{\ell_1}} \left( \prod_{v \in V_{b}(\ell_2)} \delta^\Lambda\left( \overrightarrow{\prod_{f \in \triangle^{b}_{v}}} (G_{v}^{f})^{\epsilon^{f}_{v}}\right) \right)\,.
\eeq
This term is represented by $|\cB_{\ell_{2}}|$ connected graphs of color $\ell_2$. Likewise we have a set of $|\cB_{\ell_{3}}|$ connected graphs of color $\ell_3$. The key point is that all these graphs, which were initially linked through propagators, have now independent variables. Therefore integrating strands in one of them will keep the others unchanged. Of course integrating strands in these graphs will further complicate the remainder of the integrand, that is the terms associated to the vertices of color $\ell_4$. But these are terms we will in the end bound by their value in $\one$, so the precise expression of their arguments is irrelevant.

Let $\cS$ be one of the $|\cB_{\ell_{2}}| + |\cB_{\ell_{3}}|$ such connected graphs. We can choose a maximal tree $\cT$ with root $r$ in $\cS$, and integrate all the strands in this tree. Each of these integrations just deletes a node of the graph, so that in the end only the root remains. The contribution of $\cS$ has been reduced to one unique $\delta$-function, with possibly a very complicated argument, though. Repeating the procedure so as to reduce all the connected graphs of color $\ell_{2}$ and $\ell_{3}$, we get:
\bes
\cA^{\cG} &=& (\lambda \overline{\lambda})^{\frac{\cN }{2}}  \int [\extd G]^{\cN} \left( \prod_{i =1}^{|\cB_{\ell_{2}}| + |\cB_{\ell_{3}}|} \delta^\Lambda\left( \cdots \right) \right) \left( \prod_{b \in \cB_{\ell_1}} [\delta^\Lambda(\one)]^{2-2 g_b - |V_{b}|} \left(  \prod_{v \in V_{b}(\ell_4)} \delta^\Lambda\left( \cdots \right) \right)  \right)\nn \\
&\leq& (\lambda \overline{\lambda})^{\frac{\cN }{2}}  [\delta^\Lambda\left( \one\right)]^{|\cB_{\ell_{2}}| + |\cB_{\ell_{3}}|} \left( \prod_{b \in \cB_{\ell_1}} [\delta^\Lambda(\one)]^{2-2 g_b - |V_{b}| + V_{b}(\ell_4)} \right)\nn 
\ees
But for any $b \in \cB_{\ell_4}$ we have of course $|V_{b}| - |V_{b}(\ell_4)| = |V_{b}(\ell_2)| + |V_{b}(\ell_3)|$, which concludes the proof.
\end{proof}

Starting from this proposition, we just need a bit of work on the combinatorics of a colored graph to arrive at our final result. We do so by proving the following:

\begin{lemma}\label{lemma_comb} 
Let $\cG$ be a connected vacuum graph. Then:
\beq\label{formula_lemma_comb}
\forall \ell \neq \ell'\,, \; |\cB_{\ell'}| + |\cB_{\ell}| - {\sum_{b \in \cB_{\ell}}} |V_{b}(\ell')| \leq 1\,.
\eeq
\end{lemma}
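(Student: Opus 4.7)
The plan is to reinterpret every term in (\ref{formula_lemma_comb}) as a count of connected components of subgraphs of $\cG$ with restricted line colors. Write $\{1,2,3,4\} = \{\ell, \ell', i, k\}$, and for any subset $S$ of colors denote by $N_S$ the number of connected components of the subgraph $\cG_S$ obtained from $\cG$ by keeping only the lines with colors in $S$. By definition of bubbles as 3-color connected components, $|\cB_\ell| = N_{\{\ell', i, k\}}$ and $|\cB_{\ell'}| = N_{\{\ell, i, k\}}$. For the third term, recall that $b \in \cB_\ell$ is a 3-colored graph with line colors $\{\ell', i, k\}$, dual to a triangulated surface whose vertices correspond to connected 2-color components of $b$; in the convention where a vertex is labelled by the color complementary to the two line colors of its 2-color component, a vertex of color $\ell'$ in $b$ corresponds to a connected 2-color $\{i,k\}$-component of $b$. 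Since each 2-color $\{i,k\}$-component of $\cG$ is contained in a unique bubble of $\cB_\ell$ (it is a subgraph of $\cG_{\{\ell', i, k\}}$), summing over bubbles gives $\sum_{b \in \cB_\ell} |V_b(\ell')| = N_{\{i, k\}}$. Inequality (\ref{formula_lemma_comb}) therefore reduces to
\begin{equation}
N_{\{\ell, i, k\}} + N_{\{\ell', i, k\}} \leq N_{\{i, k\}} + 1\,.
\end{equation}

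Next, I would contract each connected 2-color $\{i,k\}$-component of $\cG$ to a single super-node. This produces an auxiliary multigraph $H$ whose vertices are the $N_{\{i,k\}}$ super-nodes and whose edges are precisely the lines of $\cG$ of colors $\ell$ or $\ell'$. Because $\cG$ is connected, so is $H$. The same contraction argument shows that the number of connected components of the subgraph of $H$ obtained by retaining only its color-$\ell$ edges equals $N_{\{\ell, i, k\}}$, and symmetrically for $\ell'$: lines of colors $i$ and $k$ have been absorbed into the super-nodes, so the connectivity among them is encoded entirely by the edges of $H$ of the corresponding color.

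The proof is then concluded by a standard spanning-tree estimate in $H$. Pick any spanning tree $T \subset H$, which has $N_{\{i,k\}} - 1$ edges, decomposing as $t_\ell$ color-$\ell$ edges and $t_{\ell'}$ color-$\ell'$ edges. The color-$\ell$ edges of $T$ form a forest on the $N_{\{i,k\}}$ super-nodes, producing exactly $N_{\{i,k\}} - t_\ell$ components; adjoining the remaining color-$\ell$ edges of $H$ can only reduce that count by further merging, hence $N_{\{\ell, i, k\}} \leq N_{\{i,k\}} - t_\ell$. The symmetric bound $N_{\{\ell', i, k\}} \leq N_{\{i,k\}} - t_{\ell'}$ holds for $\ell'$; summing the two and using $t_\ell + t_{\ell'} = N_{\{i,k\}} - 1$ yields the claim. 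I expect the only delicate step to be the color bookkeeping of the first paragraph---making sure the convention that the vertex of color $\ell'$ is opposite the triangle of color $\ell'$ is consistently translated into the statement that such a vertex is dual to a 2-color $\{i,k\}$-component---after which the graph-theoretic heart of the argument is entirely elementary.
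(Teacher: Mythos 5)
Your proof is correct, and while it rests on the same basic mechanism as the paper's --- connectivity of $\cG$ forces an auxiliary incidence graph to be connected, after which the spanning-tree bound (edges $\geq$ vertices $-\,1$) gives the inequality --- the auxiliary construction is genuinely different. The paper builds the bipartite ``connectivity graph'' whose nodes are the bubbles of colors $\ell$ and $\ell'$, with a single line joining $b\in\cB_\ell$ to $b'\in\cB_{\ell'}$ whenever $b$ contains a vertex dual to $b'$; it then combines $N-1\leq L$ with the estimate that $|V_b(\ell')|$ is at least the number of connectivity-graph lines ending on $b$. You instead evaluate the middle term exactly, $\sum_{b\in\cB_\ell}|V_b(\ell')|=N_{\{i,k\}}$, reduce the claim to $N_{\{\ell,i,k\}}+N_{\{\ell',i,k\}}\leq N_{\{i,k\}}+1$, and prove that by contracting the $\{i,k\}$-components and splitting a spanning tree of the resulting two-edge-colored multigraph according to edge color. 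Your route buys an exact identity where the paper only needs (and only proves) an inequality --- several $\{i,k\}$-components shared by the same pair of bubbles contribute several vertices but only one line in the paper's graph --- and it exposes the statement as a pure fact about component counts of color-restricted subgraphs; the paper's version is shorter and stays closer to the simplicial bubble/vertex language. Your color bookkeeping in the first paragraph is consistent with the paper's conventions (a vertex of color $\ell'$ in a bubble of color $\ell$ is indeed dual to a connected $\{i,k\}$-component of that bubble, hence of $\cG$), so there is no gap.
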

\begin{proof}
Choose two colors $\ell \neq \ell'$. From $\cG$ we construct a connectivity graph $\cC_{\ell, \ell'}(\cG)$, whose elements are the bubbles of color $\ell$ and $\ell'$. Then for any $b' \in \cB_{\ell'}$ and $b \in \cB_{\ell}$ we draw a line between them if and only if $b$ has a vertex dual to $b'$ in its triangulation. We call $L$ the number of lines of $\cC_{\ell, \ell'}(\cG)$, and $N$ its number of elements. 
Now remark that the fact that $\cG$ is connected implies that $\cC_{\ell, \ell'}(\cG)$ is also connected. In fact, the bubbles of color $\ell'$ are all connected in $\cG$ by lines of color $\ell'$. But these lines are themselves part of bubbles of color $\ell$, which means that two bubbles of color $\ell'$ are connected if and only if their dual vertices appear in a same bubble of color $\ell$, that is if and only if they are connected to a same element in the graph $\cC_{\ell, \ell'}(\cG)$. So $\cC_{\ell, \ell'}(\cG)$ is connected. A maximal tree in this graph has $N -1$ lines, which implies the simple inequality: $N- 1 \leq L$.
To conclude, first notice that by construction $N$ is equal to $|\cB_{\ell'}| + |\cB_{\ell}|$. Still by construction, for any $b \in \cB_{\ell}$, $|V_{b}(\ell')|$ has to be greater than the number of lines ending on $b$ in $\cC_{\ell, \ell'}(\cG)$. Therefore:
\beq
|\cB_{\ell'}| + |\cB_{\ell}| - {\sum_{b \in \cB_{\ell}}} |V_{b}(\ell')| \leq N - L \leq 1\,.
\eeq
\end{proof}

The next, concluding theorem follows easily from the previous results. For what concerns bounds on quantum amplitudes, it is the main result of this paper.

\begin{theorem}\label{theo2} Let $\cG$ be a connected vacuum graph of order $\cN$. Then for any color $\ell$: 
\beq\label{best_ineq}
\cA^{\cG} \leq (\lambda \overline{\lambda})^{\frac{\cN}{2}}  [\delta^\Lambda(\one)]^{2 - 2 {\sum_{b \in \cB_{\ell}}} g_b}\,.
\eeq
Reciprocally, for any integers $(g_{1},...,g_{n})$, there exists a graph $\cG$ whose bubbles of color $\ell$ have genera $(g_{1},...,g_{n})$, and such that:
\beq
\cA^{\cG} = (\lambda \overline{\lambda})^{\frac{\cN}{2}}  [\delta^\Lambda(\one)]^{2 - 2 {\sum_{i=1}^{n}} g_i}\,,
\eeq
where $\cN$ is the order of $\cG$.
\end{theorem}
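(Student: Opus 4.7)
The upper bound is a direct combination of Proposition \ref{propo_ineg} and Lemma \ref{lemma_comb}. The plan is as follows. Fix a color $\ell_1 = \ell$ and any other two colors $\ell_2$, $\ell_3$ among the remaining three; the choice is free since the bound is symmetric in $\ell_2$, $\ell_3$. From Proposition \ref{propo_ineg}, the exponent of $\delta^{\Lambda}(\one)$ in the bound on $\cA^{\cG}$ can be rewritten as
\begin{equation*}
{\sum_{b \in \cB_{\ell_1}}} (2 - 2 g_b) \;+\; \Bigl[ |\cB_{\ell_2}| + |\cB_{\ell_3}| + 2|\cB_{\ell_1}| - {\sum_{b \in \cB_{\ell_1}}} \bigl( |V_b(\ell_2)| + |V_b(\ell_3)| \bigr) \Bigr] - 2|\cB_{\ell_1}|.
\end{equation*}
The point is that the bracketed expression is precisely the sum of the two quantities to which Lemma \ref{lemma_comb} applies (with $\ell = \ell_1$ and $\ell' = \ell_2, \ell_3$ successively). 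Adding those two inequalities yields that the bracket is bounded by $2$, and therefore the full exponent is at most $2 - 2 \sum_{b \in \cB_{\ell_1}} g_b$, which is exactly the claimed bound (\ref{best_ineq}).

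For the converse part, I would exhibit an explicit saturating family of graphs. The natural candidates, given Lemma \ref{lemma_chains}, are the chains $\cC_{g_1, \ldots, g_n}$ already introduced. Indeed, by construction $\cC_{g_1, \ldots, g_n}$ is a connected vacuum graph whose bubbles of color $1$ (say) are precisely the $n$ canonical bubbles of genera $g_1, \ldots, g_n$, glued to one another along lines of color $1$. Its total order is
\begin{equation*}
\cN = 2 \sum_{i=1}^{n} (2 g_i + 1),
\end{equation*}
so that Lemma \ref{lemma_chains} gives
\begin{equation*}
\cA^{\cC_{g_1, \ldots, g_n}} = (\lambda \overline{\lambda})^{\sum_{i=1}^n (2 g_i + 1)} [\delta^\Lambda(\one)]^{2 - 2 \sum_{i=1}^n g_i} = (\lambda \overline{\lambda})^{\cN/2} [\delta^\Lambda(\one)]^{2 - 2 \sum_{i=1}^n g_i},
\end{equation*}
which saturates (\ref{best_ineq}) with equality. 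Relabelling of colors yields the corresponding saturating graph for any preassigned color $\ell$.

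The only slightly delicate point is to make sure the chain $\cC_{g_1, \ldots, g_n}$ really has $n$ bubbles of color $1$ with the assigned genera (and not more, or fewer, after the gluings of external legs are performed). This is a combinatorial check on the construction in Figure \ref{chains}: gluing the free color-$1$ legs of successive $\cC_{g_i}$ pieces does not modify the color-$1$ bubble content, because the canonical $3$-graph sitting inside each $\cC_{g_i}$ uses only colors $2,3,4$, and the added color-$1$ legs become internal propagators separating adjacent bubbles rather than being absorbed into them. I expect this verification, together with the careful bookkeeping that combines Proposition \ref{propo_ineg} with two applications of Lemma \ref{lemma_comb}, to be the only non-trivial step; the rest is essentially algebraic rearrangement.
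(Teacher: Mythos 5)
Your proof is correct and follows essentially the same route as the paper: the upper bound is obtained by applying Proposition \ref{propo_ineg} with $\ell_1=\ell$ and then using Lemma \ref{lemma_comb} twice to bound the exponent, and the saturation is exactly the content of Lemma \ref{lemma_chains} applied to the chains $\cC_{g_1,\ldots,g_n}$. Your explicit rearrangement of the exponent and the sanity check on the color-$\ell$ bubble content of the chains are just slightly more detailed versions of what the paper leaves implicit.
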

\begin{proof} The first part of the theorem is a consequence of proposition \ref{propo_ineg} and lemma \ref{lemma_comb}, easily proven as follows: apply formula (\ref{formula_propo}) with $\ell_1 = \ell$, and any other colors $\ell_2$ and $\ell_3$; then bound the exponent of $\delta^\Lambda(\one)$ using two times the inequality (\ref{formula_lemma_comb}). As for the second part, this is exactly the content of lemma \ref{lemma_chains}.
\end{proof}

As already mentioned in the introduction of this section, this result does not rely on $1$-dipole contractions. Instead the stranded structure of the graphs in vertex variables allowed us to perform many integrals before using any inequality. The bound so derived is optimal, which means that this procedure fully captures the properties of the bubbles (of a given color), in the sense that the result could not be improved without taking the overall topology of the pseudo-manifold into account. Note finally that in order to compare our results to previous ones in the literature, the rescaling of the coupling (\ref{rescaling_coupling}) has to be taken into account. With this in mind, formula (\ref{best_ineq}) is indeed consistent with the existing literature \cite{lrd, scaling3d, vincentcolored, linearizedgft, ValentinMatteo3}.

\section{Conclusions and outlook}
We have proven new scaling bounds for the amplitudes of the (colored, bosonic) Boulatov group field theory for 3d quantum gravity, dependent on the bubble structure of  the associated simplicial complexes. More precisely the new bounds depend on the number and type of the point singularities of the same complex (higher-dimensional singularities have been shown to be absent in these models), and thus measure in a sense its degree of \lq manifold-ness\rq . Moreover, we have shown these bounds to be optimal. Accordingly, manifold configurations dominate, for large values of the cut-off, over pseudo-manifold configurations. 

These results deepen our understanding of the GFT perturbative expansion, and the associated sum over simplicial complexes, and confirm that  group field theories have the potential, indeed, to realize in higher dimension the success story of matrix models in two dimensions, with the emergence of a smooth spacetime and gravity from a more fundamental, pre-geometric, quantum system. 

This is thanks to the additional data turning simpler tensor models into proper field theories, and to the insights provided by loop quantum gravity and simplicial quantum gravity concerning the nature of these additional data, as encoding the properties of a quantum spacetime. In fact, in this paper we have taken full advantage of the recently developed non-commutative metric formulation of GFTs \cite{aristidedaniele}, in turn motivated by results in loop quantum gravity and spin foam models. In particular we adopted and applied to our task a re-writing of the same Boulatov GFT model in terms of vertex variables, which was suggested by the identification of simplicial diffeomorphism symmetry at the GFT level \cite{diffeos}.

Two immediate questions arise, concerning our results. As we commented above, there are several indications (see the discussion in \cite{diffeos}) that a non-trivial braiding could be, if not needed, certainly natural in a GFT context. A priori this could affect the structure and evaluation of the GFT amplitudes, and thus their scaling. It would be interesting to check, in particular, how it could affect the relative weight of manifolds and pseudo-manifolds. Also interesting would be to investigate whether our scaling bounds related to manifold-ness could be derived from the very nice power counting results based on twisted cohomology \cite{ValentinMatteo3}, under some assumption on the overall topology of the diagrams, and conversely, whether our bounds, together with these power counting results, allow to understand in more detail the scaling of the GFT amplitudes with the topology of the diagrams, and thus improve the existing results on the large cut-off expansion \cite{RazvanFullN}.

More generally, we expect the vertex reformulation of GFTs to lend itself to more applications, and to be useful for elucidating further the geometry of GFT models as well as their effective dynamics, and, possibly, to be the basis for a \lq first principles{\rq} definition of GFT models. 

The main open question is whether and how this vertex formulation of GFTs, and our results, generalize to higher dimensions, for topological models as well as, more importantly, for 4d gravity models; progress along these lines will have to proceed alongside progress in our understanding of GFT symmetries for 4d gravity models. We believe our results, and previous ones concerning these matters, indicate a promising direction.

We have therefore reasons to hope that the rapidly accumulating wealth of results concerning both the quantum geometry behind GFT models and the GFT perturbative expansion, together with the development of appropriate non-perturbative tools, will trigger much further progress in this area, in particular concerning their continuum limit, phase structure and effective (quantum) gravitational dynamics, toward a complete understanding of quantum spacetime.

\section*{Acknowledgements}
We thank A. Baratin, F. Caravelli,  R. Gurau, M. Raasakka and V. Rivasseau for useful comments and discussions.
DO gratefully acknowledges financial support from the A. von Humboldt Stiftung through a Sofja Kovalevskaja Prize.


\end{document}